\def\BibTeX{{\rm B\kern-.05em{\sc i\kern-.025em b}\kern-.08em
    T\kern-.1667em\lower.7ex\hbox{E}\kern-.125emX}}
\def \Pih{\widehat{\Pi}}
\def \thetah{\hat{\theta}}
\def \Pit{\widetilde{\Pi}}
\def \TV {\mathrm{TV}}
\def \Tr{\mathrm{Tr}}
\def \Rot{\mathrm{Rot}}
\newcommand{\bra}[1]{\langle#1\rvert} 
\newcommand{\ket}[1]{\lvert#1\rangle} 
\title{Transfer Learning for Quantum Classifiers: An Information-Theoretic Generalization Analysis\\
\thanks{STJ is with the Department of Computer Science, University of Birmingham (email:s.t.jose@bham.ac.uk) and OS is with the Department of Engineering, King's College London (email:  osvaldo.simeone@kcl.ac.uk). The work was done when STJ was a PostDoc at King's College London. STJ and OS received funding from the European Research Council
(ERC) under the European Union’s Horizon 2020 Research and Innovation
Programme (Grant Agreement No. 725731). OS was also supported by an open Fellowship of the EPSRC (EP/W024101/1). The authors would like to thank Dr. Ivana Nikoloska for useful discussions in the early stages of this work.}
}
\author{\IEEEauthorblockN{ Sharu Theresa Jose and Osvaldo Simeone}
}
\newtheorem{theorem}{Theorem}[section]
\newtheorem{lemma}{Lemma}[section]
\newtheorem{definition}{Definition}[section]
\def\bkE{{\rm I\kern-.17em E}}
\def\bk1{{\rm 1\kern-.17em l}}
\def\bkD{{\rm I\kern-.17em D}}
\def\bkR{{\rm I\kern-.17em R}}
\def\bkP{{\rm I\kern-.17em P}}
\def\bkZ{{\bf{Z}}}
\def\bkE{{\rm I\kern-.17em E}}
\def\bk1{{\rm 1\kern-.17em l}}
\def\bkD{{\rm I\kern-.17em D}}
\def\bkR{{\rm I\kern-.17em R}}
\def\bkP{{\rm I\kern-.17em P}}
\newcommand{\pushright}[1]{\ifmeasuring@#1\else\omit\hfill$\displaystyle#1$\fi\ignorespaces}
\newcommand{\pushleft}[1]{\ifmeasuring@#1\else\omit$\displaystyle#1$\hfill\fi\ignorespaces}
\def\bkZ{{\bf{Z}}}
\def\b12{(\beta_1,\beta_2)}
\newcounter{example}
\renewcommand{\theexample}{\thesection.\arabic{example}}
\newcounter{remark}
\renewcommand{\theremark}{\thesection.\arabic{remark}}
\def\Xscr{\mathcal{X}}
\def\Ebb{\mathbb{E}}
\newlength{\noteWidth}
\long\def\notes#1{\ifinner
{\tiny #1}
\else
\marginpar{\parbox[t]{\noteWidth}{\raggedright\tiny #1}}
\fi\typeout{#1}}
 \def\notes#1{\typeout{read notes: #1}} 
\newcommand{\ie}{i.e.\@\xspace} 
\newcommand{\Real}{\ensuremath{\mathbb{R}}}
\def\Ebb{\mathbb{E}}
\def\exp{\mathop{\hbox{\rm exp}}}
\def\spose#1{\hbox to 0pt{#1\hss}}
\def\text #1{\hbox{\quad#1\quad}}
\def\Escr{\mathcal{E}}
\def\nthinsp{\mskip -2   mu}
\def\superstar{^{\raise 0.5pt\hbox{$\nthinsp *$}}}
\def\SUPERSTAR{^{\raise 0.5pt\hbox{$*$}}}
\def\lamstarT {\lambda^{\raise 0.5pt\hbox{$\nthinsp *$}T}}
\def\Ascr{{\cal A}}
\def\Fscr{{\cal F}}
\def\Dscr{{\cal D}}
\def\Mscr{{\cal M}}
\def\Tscr{{\cal T}}
\def\Oscr{{\cal O}}
\def\Sscr{{\cal S}}
\def\Uscr{{\cal U}}
\def\Mscr{{\cal M}}
\def\Nscr{{\cal N}}
\def\Rscr{{\cal R}}
\def\Gscr{{\cal G}}
\def\Cscr{{\cal C}}
\def\Xscr{{\cal X}}
\def\non{\nonumber}
\let\forallnew\forall
\renewcommand{\forall}{\forallnew\ }
\let\forall\forallnew
		\def\bkE{{\rm I\kern-.17em E}}
		\def\bk1{{\rm 1\kern-.17em l}}
		\def\bkD{{\rm I\kern-.17em D}}
		\def\bkR{{\rm I\kern-.17em R}}
		\def\bkP{{\rm I\kern-.17em P}}
		\def\bkY{{\bf \kern-.17em Y}}
		\def\bkZ{{\bf \kern-.17em Z}}
		\def\bkC{{\bf  \kern-.17em C}}
		\def\bsp{\begin{split}}
		\def\beq{\begin{eqnarray}}
		\def\bal{\begin{align*}}
		\def\bc{\begin{center}}
		\def\be{\begin{enumerate}}
		\def\bi{\begin{itemize}}
		\def\bs{\begin{small}}
		\def\bS{\begin{slide}}
		\def\ec{\end{center}}
		\def\ee{\end{enumerate}}
		\def\ei{\end{itemize}}
		\def\es{\end{small}}
		\def\eS{\end{slide}}
		\def\eeq{\end{eqnarray}}
		\def\eal{\end{align*}}
		\def\esp{\end{split}}
		\def\qed{ \vrule height7.5pt width7.5pt depth0pt}  
	\def\cp2problem#1#2#3#4{\fbox
		 {\begin{tabular*}{0.9\textwidth}
			{@{}l@{\extracolsep{\fill}}l@{\extracolsep{6pt}}l@{\extracolsep{\fill}}c@{}}
				#1 & & $#4 $ 
			\end{tabular*}}}
		\def\bkE{{\rm I\kern-.17em E}}
		\def\bk1{{\rm 1\kern-.17em l}}
		\def\bkD{{\rm I\kern-.17em D}}
		\def\bkR{{\rm I\kern-.17em R}}
		\def\bkP{{\rm I\kern-.17em P}}
		\def\bkZ{{\bf{Z}}}
\newcommand {\beeq}[1]{\begin{equation}\label{#1}}
\newcommand {\eeeq}{\end{equation}}
\newcommand {\bea}{\begin{eqnarray}}
\newcommand {\eea}{\end{eqnarray}}
\def\texitem#1{\par\smallskip\noindent\hangindent 25pt
               \hbox to 25pt {\hss #1 ~}\ignorespaces}
\def\bsp{\begin{split}}
		\def\beq{\begin{eqnarray}}
		\def\bal{\begin{align*}}
		\def\bc{\begin{center}}
		\def\be{\begin{enumerate}}
		\def\bi{\begin{itemize}}
		\def\bs{\begin{small}}
		\def\bS{\begin{slide}}
		\def\ec{\end{center}}
		\def\ee{\end{enumerate}}
		\def\ei{\end{itemize}}
		\def\es{\end{small}}
		\def\eS{\end{slide}}
		\def\eeq{\end{eqnarray}}
		\def\eal{\end{align*}}
		\def\esp{\end{split}}
		\def\qed{ \vrule height7.5pt width7.5pt depth0pt}  
\newenvironment{proof}[1][]{{\noindent \textit{ Proof}: }}{\hfill \qed \vspace{3pt}\\ }
\def\Cscr{{\cal C}}
\def\Nscr{{\cal N}}
\begin{document}
\maketitle

\begin{abstract}
A key component of a quantum machine learning model operating on classical inputs is the design of an embedding circuit mapping inputs to a quantum state. This paper studies a transfer learning setting in which classical-to-quantum embedding is carried out by an arbitrary parametric quantum circuit that is pre-trained based on data from a source task. At run time, a binary quantum classifier of the embedding is optimized based on data from the target task of interest. 
The average excess risk, i.e., the optimality gap, of the resulting classifier depends on how (dis)similar the source and target tasks are.
We introduce a new measure of (dis)similarity between the binary quantum classification tasks via the trace distances. An upper bound on  the optimality gap is  derived in terms of the proposed task (dis)similarity measure, two R\'enyi mutual information terms between classical input and quantum embedding under source and target tasks, as well as a measure of complexity of the combined space of quantum embeddings and classifiers under the source task. The theoretical results are  validated on a simple binary classification example. 
\end{abstract}

\section{Introduction}\label{sec:introduction}
Quantum machine learning (QML) is an emerging paradigm for programming noisy, intermediate scale quantum (NISQ) computers \cite{simeone2022introduction}. In QML, the parameter vector $\theta$ defining the operation of a 
parametric quantum circuit (PQC) is optimized based on quantum or classical data.
When input data are classical, it is necessary to  design an embedding circuit to map classical inputs to a quantum state \cite[Ch. 6]{schuld2021machine}. This is illustrated in Fig.~\ref{fig:QTL}, in which the classical input vector $x$ is mapped, via a PQC, to a quantum state defined by a density matrix $\rho_{\theta}(x)$. We focus on the task of classifying the input $x$ by applying a quantum measurement $\{M_c\}$ to the density matrix $\rho_{\theta}(x)$ \cite{schuld2021machine,helstrom1969quantum}. Following the QML framework, both the parameter vector $\theta$ of the embedding circuit and the classifying quantum measurement are optimized  based on supervised examples of the form $(c,x)$, where $c$ is a binary label. This paper analyzes the generalization properties of the trained quantum classifier.

\begin{figure}
    \centering
    \includegraphics[scale=0.34,trim=0.3in 1in 0in 0in,clip=true]{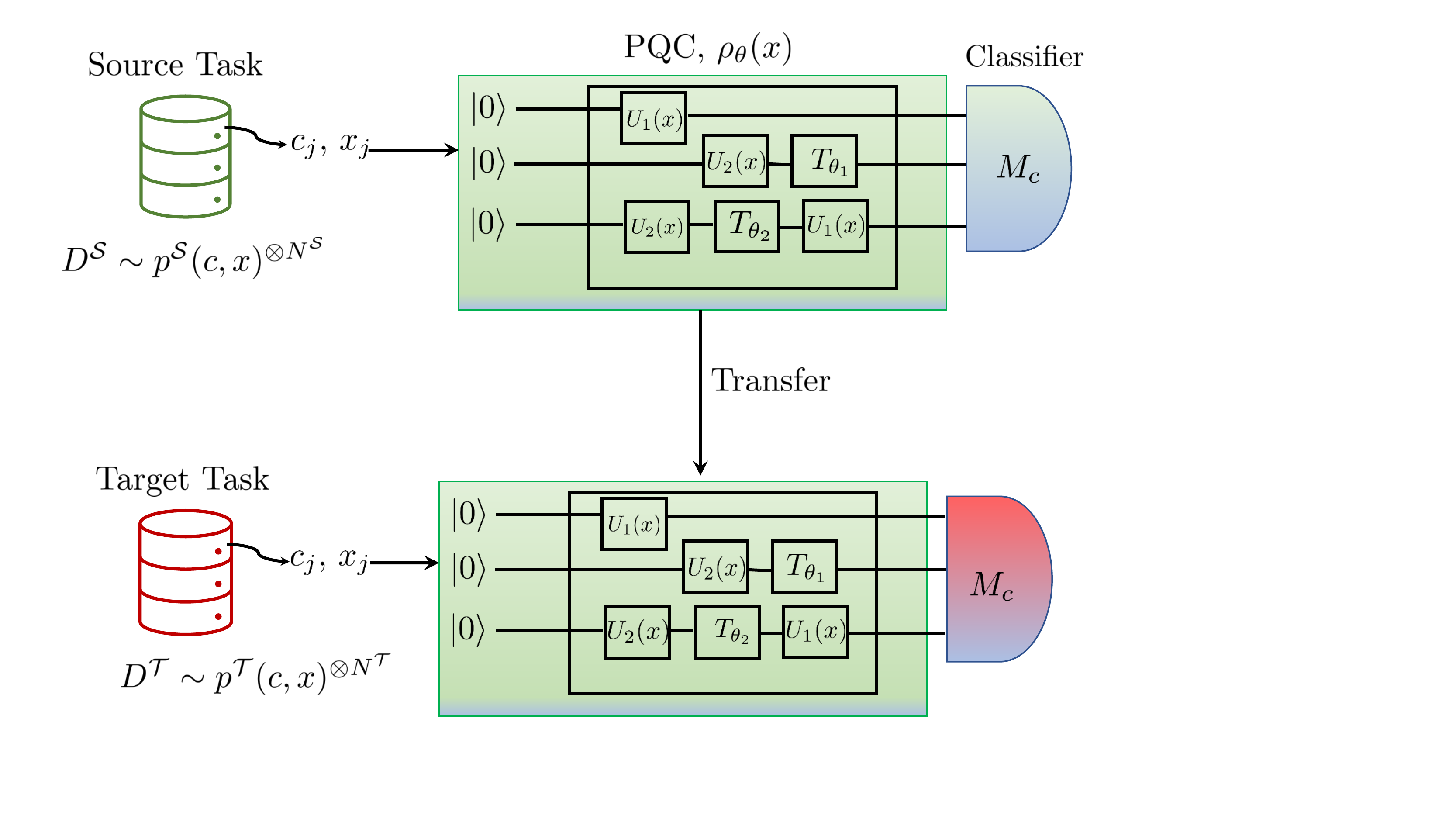}
    \caption{Illustration of the transfer learning problem under study. Data from the source task (green) is used to learn the parameters $\theta$ of the parameterized quantum circuit (PQC) implementing classical-to-quantum embedding which is then fixed for use by the target task. The target task uses its data (red) to learn the optimal task-specific binary classifier. }
    \label{fig:QTL}
    \vspace{-0.4cm}
\end{figure}
Reference \cite{banchi2021generalization} has recently studied the generalization properties of the circuit in Fig.~\ref{fig:QTL} for a \emph{fixed} embedding parameter $\theta$ as a function of the number $N$ of examples $(c,x)$ used to optimize the measurement from an information-theoretic viewpoint. The authors have shown that the excess risk, i.e., the optimality gap, can be bounded as $\Oscr(\sqrt{2^{I_2(X;R_{\theta})}}/\sqrt{N})$, where $I_2(X;R_{\theta})$ is the $2$-R\'enyi mutual information (MI) between the classical input $x$ and the quantum embedding $\rho_{\theta}(x)$ under the classical-quantum state $\rho_{XR_{\theta}}=\Ebb_{p(x)}[|x\rangle \langle x|\otimes \rho_\theta(x)]$, with $p(x)$ being the marginal of the ground-truth distribution $p(c,x)$ \cite{wilde2013quantum}.

In this paper, we consider the more practical case in which one needs to design \emph{both} embedding parameter vector $\theta$ and classifying measurement $\{M_c\}$. Furthermore, we address the challenging scenario in which  limited data is available from the target task. To this end, as in \cite{mari2020transfer}, we assume that the embedding circuit producing the state $\rho_\theta(x)$ is \emph{pre-trained} based on, generally more abundant, data from a related \emph{source task}. With the pre-trained embedding circuit $\rho_{\theta}(x)$, data from the target task is used only to adapt the classifying measurement. Therefore, 
differently from \cite{banchi2021generalization}, the average excess risk of the resulting classifier with respect to the target task depends crucially on how similar the source and target tasks are. 

Our main contribution is  a new measure of dissimilarity between binary quantum classification tasks that allows us to derive an upper bound on the average excess risk of the binary classifier. The derived bound scales as
 $\Oscr(\sqrt{\sup_{\theta}2^{I_2^{\Tscr}(X;R_{\theta})}}/\sqrt{N^{\Tscr}})+\Oscr((\mathfrak{R}^{\Sscr}_{\Theta,\Mscr}+\sqrt{\sup_{\theta}2^{I_2^{\Sscr}(X;R_{\theta})}})/\sqrt{N^{\Sscr}})+D^{ST}$, where $D^{ST}$ is the proposed (dis)similarity between the source and target tasks;  $I_2^{\Sscr}(X;R_{\theta})$ and $I_2^{\Tscr}(X;R_{\theta})$ are the $2$-R\'enyi MIs  between classical input and quantum embedding under the ground-truth data distributions of source and target tasks; and $\mathfrak{R}^{\Sscr}_{\Theta,\Mscr}$ is the Rademacher complexity of the joint space of quantum embeddings and measurements that scales with the dimension of the Hilbert space of the quantum embedding.

Apart from the mentioned reference \cite{banchi2021generalization}, generalization properties of variational quantum circuits as QML models have been characterized via an information geometric approach based on Fisher information \cite{abbas2021power}; via the Rademacher complexity of the space of PQCs measured in terms of the ratio of the number of gates in the circuit to the number of data samples \cite{caro2021generalization}; via a covering number based measure of expressivity of variational quantum circuits \cite{du2022efficient};  or as  a function  of data encoding strategies \cite{caro2021encoding}. Also related is reference \cite{tripuraneni2020theory}, which studies a classical version of the problem considered in this work, revealing the role of task similarity for transfer representation learning. To the best of our knowledge, ours is the first work that studies the generalization error incurred in transfer learning  quantum embeddings from an information-theoretic perspective.

The rest of the paper is organized as follows. Sec.\ref{sec:problem} details the two-stage transfer learning problem under study and defines the average excess risk. Sec.\ref{sec:similaritymetric} introduces a  similarity metric between source and target tasks based on trace distances. Leveraging this metric, Sec.\ref{sec:infobounds} presents an information-theoretic upper bound on the average excess risk. Theoretical conclusions are demonstrated via  examples in Sec.\ref{sec:example}.
\section{Problem Formulation}\label{sec:problem}
In this section, we first describe the  quantum classification problem studied in \cite{banchi2021generalization} in which the  quantum embedding parameter vector $\theta$ is fixed, and then we present the two-stage transfer learning problem illustrated in Fig.~\ref{fig:QTL} in which the embedding circuit parameter $\theta$ is pre-trained based on data from a separate source task.

\subsection{Quantum Classification with a Fixed Embedding}\label{sec:quantclassification_fixedembedding}

Let $x$ denote the classical input feature vector and $c \in \{0,1\}$ be the corresponding binary class index. We take $x$ to assume values in an arbitrary discrete finite set, although extensions to continuous-valued inputs are direct \cite{banchi2021generalization}. The data sample $(c,x)$ is generated from an \emph{unknown} underlying joint distribution $p^{\mathcal{T}}(c,x)$ describing the \emph{target} task. The \emph{embedding circuit} maps the classical feature vector $x$  to a \textit{density matrix} $\rho_{\theta}(x)$, i.e., to  a positive semi-definite unit-trace matrix defined on some (finite-dimensional) Hilbert space. The embedding circuit is implemented by a PQC parameterized by a (classical) parameter vector $\theta \in \Theta$, where $\Theta$ is an arbitrary set.

The \emph{classifier} consists of a positive operator-valued measure (POVM) applied to the quantum state $\rho_{\theta}(x)$. The POVM is defined by two positive-semidefinite matrices  $M=\{M_c\}_{c=0}^{1}$, of the same dimensions of the density matrix $\rho_{\theta}(x)$, that satisfy the conditions $M_c \geq 0$ and $\sum_{c=0}^{1}M_c=I$. By Born's rule, the classifier chooses class $c$ with probability $\Tr(M_c \rho_{\theta}(x))$, where $\Tr(\cdot)$ represents the trace operation. We use $\Mscr=\{M:M_c \geq 0, \sum_{c=0}^{1}M_c=I\}$ to denote the set of all binary POVMs for the given Hilbert space.

 For a fixed embedding parameter $\theta$, quantum supervised classification \cite{banchi2021generalization} optimizes the POVM $M \in \Mscr$ with the ideal goal of minimizing the expected probability of error, also known as the \emph{expected risk}, i.e.,
\begin{align}
    \Rscr^{\mathcal{T}}_{\theta,M}=\Ebb_{p^{\mathcal{T}}(c,x)}[\ell_{\theta,M}(c,x)], \label{eq:exp_risk}
\end{align}over $M \in \Mscr$, where \begin{align}
    \ell_{\theta,M}(c,x)=1-\Tr(M_c \rho_{\theta}(x)) \label{eq:loss}
\end{align} is the probability of error evaluated on an  example $(c,x)$. Accordingly, the minimum expected risk for parameter $\theta$ is given as
\begin{align}
    \Rscr^{\Tscr}_{\theta}= \min_{M \in \Mscr}\Rscr^{\mathcal{T}}_{\theta,M} \label{eq:minexpectedrisk}.
\end{align}

Since the ground-truth joint distribution $p^{\mathcal{T}}(c,x)$ is unknown, the optimization of the POVM $M$ is done by using a training data set $\mathcal{D}^{\Tscr}=\{(c_1,x_1), \hdots, (c_{N^{\Tscr}},x_{N^{\Tscr}})\}$ of $N^{\Tscr}$ samples, whose individual data points $(c_j,x_j)$ are assumed to be independent identically distributed (i.i.d.) according to $p^{\mathcal{T}}(c,x)$. Specifically, the POVM is obtained by minimizing the \emph{training loss}\begin{align}\label{eq:emptarget}
    \widehat{\Rscr}^{\mathcal{T}}_{\theta,M}=\frac{1}{N^{\mathcal{T}}}\sum_{(c,x)\in \mathcal{D}^{\Tscr}} \ell_{\theta,M}(c,x).
\end{align}The solution of this optimization can be obtained in closed form, yielding the so-called \emph{Hellstrom measurement} (see \cite[Sec. III]{helstrom1969quantum}). We write as  \begin{align}\label{eq:notation}
        \widehat{M}^{\Tscr}_{\theta}=\arg \min_{M \in \Mscr}\widehat{\Rscr}^{\Tscr}_{\theta,M} \textrm{ and } \widehat{\Rscr}^{\mathcal{T}}_{\theta}=\widehat{\Rscr}^{\mathcal{T}}_{\theta,\widehat{M}^{\Tscr}_{\theta}}
    \end{align}the  optimal POVM and the corresponding minimized training loss for a fixed $\theta$, respectively.

The classifier obtained with the POVM \eqref{eq:notation} is considered to generalize well if it yields a low expected risk (\ref{eq:loss}). In this regard, a key metric of interest is the \emph{excess risk} \begin{align}
    \Delta\Rscr_{\theta}^{\Tscr} =\Rscr^{\Tscr}_{ \theta,\widehat{M}^{\Tscr}_{\theta}}-\Rscr^{\Tscr}_{ \theta} \label{eq:excess_risk_1},
\end{align} which is the difference between the expected risk (\ref{eq:exp_risk}) obtained via the outlined learning process and the genie-aided expected risk obtained with the optimal POVM. As described in Section~\ref{sec:introduction}, an information-theoretic bound on the excess risk (\ref{eq:excess_risk_1}) was derived in  \cite{banchi2021generalization} for a fixed parameter $\theta$.
\subsection{Transfer Learning for Quantum Classification}
In this work, as illustrated in Fig.~\ref{fig:QTL}, we consider a two-stage transfer learning problem, in which the embedding parameter vector $\theta$ is pre-trained based on data from a \emph{source} task with underlying true data distribution $p^{\Sscr}(c,x)$, which is generally different from the distribution $p^{\Tscr}(c,x)$ of the target task. To this end, we assume to have access to a training set    $\mathcal{D}^{\Sscr}=\{(c_1,x_1), \hdots, (c_{N^{\Sscr}},x_{N^{\Sscr}})\}$ of $N^{\Sscr}$ samples generated i.i.d. according to the source task distribution $p^{\Sscr}(c,x)$. In a typical implementation, one uses source-task data to compensate for limitations in the availability of target-task data. Therefore,  one may assume that the number of data samples $N^{\Sscr}$ from the source task is larger than that for the target task, \ie, $N^{\Tscr} \ll N^{\Sscr}$.

As illustrated in Fig.~\ref{fig:QTL}, in the pre-training phase, the source-task data set $\mathcal{D}^{\Sscr}$ is used to optimize the embedding parameter $\theta$. In the training phase, the embedding parameter is fixed to the pre-trained value $\thetah$ obtained from the first phase, and the POVM for the target task is optimized as described in the previous subsection.

To elaborate, in the \emph{pre-training phase}, the embedding parameter vector $\theta$ is obtained by minimizing the
    \emph{training loss on the source-task data} as
    \begin{align}
        \thetah= \arg \min_{\theta \in \Theta} \widehat{\Rscr}^{\Sscr}_{\theta} \label{eq:thetahat},
    \end{align} where we have defined the source-task  training loss as $\widehat{\Rscr}^{\Sscr}_{\theta}=\arg \min_{M \in \Mscr} \widehat{\Rscr}^{\mathcal{S}}_{\theta,M}$ with
    $\widehat{\Rscr}^{\mathcal{S}}_{\theta,M}=\sum_{(c,x)\in \mathcal{D}^{\Sscr}} \ell_{\theta,M}(c,x)/N^{\Sscr}$.  In the \emph{training phase}, the classifying measurement is optimized as in \eqref{eq:notation} using the target-task data for the pre-trained embedding parameter vector $\thetah$ in (\ref{eq:thetahat}). This yields the POVM $\widehat{M}^{\Tscr}_{\thetah}$ and the expected risk $\widehat{\Rscr}^{\mathcal{T}}_{\thetah}=\widehat{\Rscr}^{\mathcal{T}}_{\thetah,\widehat{M}^{\Tscr}_{\thetah}}$.

In order to evaluate the generalization properties of transfer learning, we adopt the \emph{transfer excess risk} 
\begin{align}
    \Delta\Rscr^{\Sscr \rightarrow \Tscr} =\Rscr^{\Tscr}_{\thetah,\widehat{M}^{\Tscr}_{\thetah}}-\min_{\theta\in \Theta}\Rscr^{\Tscr}_{\theta} \label{eq:excess_risk}.
\end{align} Unlike the excess risk in  (\ref{eq:excess_risk_1}), the transfer excess risk captures the impact on generalization not only of the classifier, which is trained using target-task data, but also of the embedding parameter $\theta$, which is pre-trained using source-task data.
The transfer excess risk \eqref{eq:excess_risk} thus depends intuitively on how ``similar'' the embedding parameter vectors $\theta$ that minimize the losses on the source and target tasks are. 

\vspace{-0.2cm}
\section{On the Similarity of Source and Target Tasks}\label{sec:similaritymetric}
In this section, we present a similarity metric for source and target tasks that will be shown in the next section to determine a bound on the transfer excess risk \eqref{eq:excess_risk}. To this end, we start with some preliminary background on quantum information.
\vspace{-0.2cm}
\subsection{Preliminaries}
Let $\rho$ and $\sigma$ denote two square matrices defined on the same (finite-dimensional) Hilbert space. The \emph{trace distance} $T(\rho,\sigma)$ between the matrices $\rho$ and $\sigma$ 
is defined as \cite{wilde2013quantum} 
\begin{align}\label{eq:trdist}
   T(\rho,\sigma)=\frac{1}{2} \lVert \rho-\sigma\rVert_1 ,
\end{align}where $\lVert A \rVert_1=\Tr(\sqrt{A^{\dag}A})$ is the \emph{trace norm} of the matrix $A$, with $A^{\dag}$ denoting the conjugate transpose of $A$. The trace distance satisfies triangle inequality, and for two density matrices $\rho$ and $\sigma$, it is bounded as $0 \leq T(\rho,\sigma) \leq 1$ \cite{wilde2013quantum}.

\subsection{Task-Induced Distance between Embedding Parameters}
We start by  defining a distance measure $d^{\Ascr}(\theta,\theta')$ between two embedding parameter vectors $\theta$ and $\theta' \in \Theta$  induced by a task $\Ascr \in \{\Sscr,\Tscr\}$.  The distance $d^{\Ascr}(\theta,\theta')$ is given by the difference between the minimum expected risks \eqref{eq:minexpectedrisk} obtained with embedding parameters $\theta$ and $\theta'$.
\begin{definition}[Task-based Distance]\label{def:taskbaseddistance}
For task $\mathcal{A}$, with $\mathcal{A} \in \{\Sscr,\Tscr\}$, the \emph{task-based distance} between any two embedding parameters $\theta$ and $\theta' \in \Theta$ is defined as
\begin{align}
    d^{\mathcal{A}}(\theta,\theta')=|\Rscr^{\mathcal{A}}_{\theta'}-\Rscr^{\mathcal{A}}_{\theta}|, \label{eq:distance_measure}
\end{align}where the minimum expected risk is defined in \eqref{eq:minexpectedrisk}. We have the inequalities $0 \leq d^{\mathcal{A}}(\theta,\theta') \leq 0.5$.
\end{definition}

The task-based distance \eqref{eq:distance_measure} can be computed explicitly by introducing the \emph{class-$c$ average density matrix}
 \begin{equation}\label{eq:perdens}\rho^{\mathcal{A}}_{\theta|c}=\Ebb_{p^{\mathcal{A}}(x|c)}[\rho_{\theta}(x)]\end{equation} for $c\in \{0,1\}$ and task $\Ascr \in \{\Sscr,\Tscr\}$. As mentioned in Sec.~\ref{sec:quantclassification_fixedembedding}, the expected risk \eqref{eq:minexpectedrisk} for task $\Ascr$ is  minimized by the \emph{Hellstrom POVM}, and the resulting minimal expected risk can be obtained in closed form as \cite[Ex. 9.1.7]{wilde2013quantum},
\begin{align}
      \Rscr^{\Ascr}_{\theta}&=\frac{1}{2} - T(p^{\Ascr}_{c}(0)\rho^{\Ascr}_{\theta|0},p^{\Ascr}_{c}(1)\rho^{\Ascr}_{\theta|1})\label{eq:11} 
\end{align}where $p^{\Ascr}_c(c)$ denotes the relevant marginal of the joint distribution $p^{\Ascr}(c,x)$.




\subsection{Measure of Similarity between Source and Target Tasks}\label{sec:similarity}
Of particular interest is the task-based distance $d^{\mathcal{A}}(\theta,\theta_{*}^{\mathcal{A}})$ between any embedding parameter $\theta \in \Theta$ and the embedding parameter $\theta_{*}^{\mathcal{A}}$ that minimizes the expected risk \eqref{eq:minexpectedrisk} for task $\mathcal{A}$, i.e., $\theta_{*}^{\mathcal{A}}=\arg \min_{\theta \in \Theta}  \Rscr^{\mathcal{A}}_{\theta}$, for $\mathcal{A} \in \{\Sscr,\Tscr\}$. This distance measures the sub-optimality of the parameter $\theta$ with respect to the optimal embedding parameter $\theta_{*}^{\mathcal{A}}$ for task $\Ascr$. This is because, by Definition~\ref{def:taskbaseddistance}, a small distance $d^{\Ascr}(\theta,\theta_{*}^{\mathcal{A}})$ implies that the expected risks with embedding parameters $\theta$ and $\theta_{*}^{\mathcal{A}}$ are close. 
Using this idea, and inspired by  \cite[Def. 3]{tripuraneni2020theory}, we introduce the following definition of task dissimilarity. \begin{definition} \label{def:DST} Tasks $\Tscr$ and $\Sscr$ are \emph{$D^{ST}$-dissimilar} if we have the inequality \begin{align}
    d^{\Tscr}(\theta,\theta_{*}^{\Tscr}) \leq d^{\Sscr}(\theta,\theta_{*}^{\Sscr})+D^{ST} \label{eq:DST_def} \end{align} for \emph{all} embedding parameters $\theta \in \Theta$.
\end{definition} 

Hence, the two tasks are $D^{ST}$-dissimilar if the suboptimality of each embedding parameter $\theta$ on the target task  differs from the suboptimality for the source task by no more than a scalar constant $D^{ST}$. Intuitively, a small constant $D^{ST}$ should result in a positive transfer of information from source to target task during pre-training.

The following theorem provides two explicit task dissimilarity measures  satisfying \eqref{eq:DST_def}. To this end, we  define $\TV(p,q)=0.5 \sum_{x \in \Xscr}|p(x)-q(x)|$ as the total variation (TV) distance between discrete distributions $p$ and $q$.
\begin{theorem}\label{lem:relatedness}
For source task, with data distribution $p^{\Sscr}(c,x)$, and target task,  with data distribution $p^{\Tscr}(c,x)$, the following quantities
 \begin{align}
  D^{ST}_{\mathrm{trace}}&= 2 \sup_{\theta \in \Theta} |\Rscr^{\Sscr}_{\theta}-\Rscr^{\Tscr}_{\theta}| \label{eq:DST_trace}, \quad \mbox{and} \\
    D^{ST}_{\mathrm{TV}}&= 2\TV(p^{\Tscr}_c,p^{\Sscr}_c)+2\Ebb_{p^{\Sscr}_c}[\TV(p^{\Tscr}(x|c),p^{\Sscr}(x|c))] \label{eq:DST_TV}
\end{align} satisfy the inequality \eqref{eq:DST_def}. Furthermore, we have  $D^{ST}_{\mathrm{trace}} \leq D^{ST}_{\mathrm{TV}}$.
\end{theorem}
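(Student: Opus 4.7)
I would prove the two claims in the order they are stated, first verifying that $D^{ST}_{\mathrm{trace}}$ satisfies \eqref{eq:DST_def}, and then showing $D^{ST}_{\mathrm{trace}} \leq D^{ST}_{\mathrm{TV}}$ (so that $D^{ST}_{\mathrm{TV}}$ inherits the validity of \eqref{eq:DST_def}).

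\textbf{Step 1 (the $D^{ST}_{\mathrm{trace}}$ bound).} By Definition~\ref{def:taskbaseddistance} and the fact that $\theta_*^{\Ascr}$ is a minimizer of $\Rscr^{\Ascr}_{\theta}$, the absolute values collapse: $d^{\Ascr}(\theta,\theta_*^{\Ascr}) = \Rscr^{\Ascr}_{\theta} - \Rscr^{\Ascr}_{\theta_*^{\Ascr}}$ for $\Ascr \in \{\Sscr,\Tscr\}$. Writing $\epsilon := \sup_{\theta} |\Rscr^{\Sscr}_{\theta} - \Rscr^{\Tscr}_{\theta}|$, I would add and subtract $\Rscr^{\Sscr}_{\theta}$ and $\Rscr^{\Sscr}_{\theta_*^{\Sscr}}$, and separately use the sub-optimality $\Rscr^{\Tscr}_{\theta_*^{\Tscr}} \leq \Rscr^{\Tscr}_{\theta_*^{\Sscr}}$:
\begin{align*}
\Rscr^{\Tscr}_{\theta} - \Rscr^{\Tscr}_{\theta_*^{\Tscr}} &\leq \Rscr^{\Tscr}_{\theta} - \Rscr^{\Tscr}_{\theta_*^{\Sscr}} \\
&\leq (\Rscr^{\Sscr}_{\theta}+\epsilon) - (\Rscr^{\Sscr}_{\theta_*^{\Sscr}}-\epsilon) = d^{\Sscr}(\theta,\theta_*^{\Sscr}) + 2\epsilon,
\end{align*}
which is exactly \eqref{eq:DST_def} with constant $2\epsilon = D^{ST}_{\mathrm{trace}}$.

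\textbf{Step 2 (from trace to TV).} To show $D^{ST}_{\mathrm{trace}} \leq D^{ST}_{\mathrm{TV}}$, I would start from the closed form \eqref{eq:11} for $\Rscr^{\Ascr}_{\theta}$ so that, for any $\theta$,
\begin{equation*}
|\Rscr^{\Sscr}_{\theta}-\Rscr^{\Tscr}_{\theta}| = \bigl| T(p^{\Tscr}_c(0)\rho^{\Tscr}_{\theta|0},p^{\Tscr}_c(1)\rho^{\Tscr}_{\theta|1}) - T(p^{\Sscr}_c(0)\rho^{\Sscr}_{\theta|0},p^{\Sscr}_c(1)\rho^{\Sscr}_{\theta|1}) \bigr|.
\end{equation*}
Using the reverse-triangle inequality $|T(A,B)-T(C,D)| \leq T(A,C)+T(B,D)$ (an immediate consequence of the triangle inequality for the trace norm), this is upper bounded by $\sum_{c=0}^{1} T\bigl(p^{\Tscr}_c(c)\rho^{\Tscr}_{\theta|c},\,p^{\Sscr}_c(c)\rho^{\Sscr}_{\theta|c}\bigr)$. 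For each $c$, I would insert the intermediate matrix $p^{\Sscr}_c(c)\rho^{\Tscr}_{\theta|c}$ and apply the triangle inequality again, yielding
\begin{equation*}
T\bigl(p^{\Tscr}_c(c)\rho^{\Tscr}_{\theta|c},p^{\Sscr}_c(c)\rho^{\Sscr}_{\theta|c}\bigr) \leq \tfrac{1}{2}|p^{\Tscr}_c(c)-p^{\Sscr}_c(c)| + p^{\Sscr}_c(c)\, T(\rho^{\Tscr}_{\theta|c},\rho^{\Sscr}_{\theta|c}),
\end{equation*}
where I used that density matrices have unit trace norm. The remaining trace distance is bounded by a classical TV distance by convexity of the trace norm: plugging in \eqref{eq:perdens}, linearity gives $\rho^{\Tscr}_{\theta|c}-\rho^{\Sscr}_{\theta|c} = \sum_x (p^{\Tscr}(x|c)-p^{\Sscr}(x|c))\rho_\theta(x)$, so the triangle inequality for $\lVert\cdot\rVert_1$ together with $\lVert \rho_\theta(x)\rVert_1 = 1$ produces $T(\rho^{\Tscr}_{\theta|c},\rho^{\Sscr}_{\theta|c}) \leq \TV(p^{\Tscr}(x|c),p^{\Sscr}(x|c))$. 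Summing over $c$ and multiplying by $2$ shows $D^{ST}_{\mathrm{trace}}\leq D^{ST}_{\mathrm{TV}}$, which also implies that $D^{ST}_{\mathrm{TV}}$ satisfies \eqref{eq:DST_def}.

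\textbf{Expected difficulty.} Step 1 is a standard two-sided approximation trick, so the only work is identifying the right chain of inequalities; no obstacle there. Step 2 is where care is required: the asymmetric appearance of $p^{\Sscr}_c$ (rather than $p^{\Tscr}_c$) in the expectation inside $D^{ST}_{\mathrm{TV}}$ is a hint that the ``right'' intermediate matrix to insert is $p^{\Sscr}_c(c)\rho^{\Tscr}_{\theta|c}$, not $p^{\Tscr}_c(c)\rho^{\Sscr}_{\theta|c}$. Choosing this insertion correctly—and then recognizing that the classical reduction $T(\rho^{\Tscr}_{\theta|c},\rho^{\Sscr}_{\theta|c}) \leq \TV(p^{\Tscr}(\cdot|c),p^{\Sscr}(\cdot|c))$ holds for \emph{any} embedding $\rho_\theta(x)$ via the convexity/triangle argument on $\lVert\cdot\rVert_1$—is the main subtlety. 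The rest is bookkeeping.
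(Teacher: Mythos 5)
Your Step~2 is correct and is, modulo presentation, the paper's own argument: the paper reaches the same intermediate bound $|T^{\Tscr}(\theta)-T^{\Sscr}(\theta)|\leq \sum_{c} T\bigl(p^{\Tscr}_c(c)\rho^{\Tscr}_{\theta|c},p^{\Sscr}_c(c)\rho^{\Sscr}_{\theta|c}\bigr)$ via the variational representation $\lVert M\rVert_1=\max_U|\Tr(MU)|$ rather than your reverse-triangle inequality, and then uses exactly your decomposition $p^{\Tscr}_c(c)\rho^{\Tscr}_{\theta|c}-p^{\Sscr}_c(c)\rho^{\Sscr}_{\theta|c}=(p^{\Tscr}_c(c)-p^{\Sscr}_c(c))\rho^{\Tscr}_{\theta|c}+p^{\Sscr}_c(c)(\rho^{\Tscr}_{\theta|c}-\rho^{\Sscr}_{\theta|c})$, which is what your insertion of the intermediate matrix $p^{\Sscr}_c(c)\rho^{\Tscr}_{\theta|c}$ produces. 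Your triangle-inequality phrasing is cleaner and equally valid, and you correctly identified why the insertion must be asymmetric. The deduction that $D^{ST}_{\mathrm{TV}}$ inherits \eqref{eq:DST_def} from $D^{ST}_{\mathrm{trace}}\leq D^{ST}_{\mathrm{TV}}$ is also fine.

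The problem is the first inequality of Step~1. Since $\theta_{*}^{\Tscr}=\arg\min_{\theta}\Rscr^{\Tscr}_{\theta}$, the sub-optimality you invoke, $\Rscr^{\Tscr}_{\theta_{*}^{\Tscr}}\leq \Rscr^{\Tscr}_{\theta_{*}^{\Sscr}}$, gives $\Rscr^{\Tscr}_{\theta}-\Rscr^{\Tscr}_{\theta_{*}^{\Tscr}}\geq \Rscr^{\Tscr}_{\theta}-\Rscr^{\Tscr}_{\theta_{*}^{\Sscr}}$ --- the reverse of what you wrote: subtracting a smaller number yields a larger difference, so replacing $\theta_{*}^{\Tscr}$ by $\theta_{*}^{\Sscr}$ inside the target risk takes you to a \emph{lower} bound on $d^{\Tscr}(\theta,\theta_{*}^{\Tscr})$, and the (valid) upper bound you then prove for $\Rscr^{\Tscr}_{\theta}-\Rscr^{\Tscr}_{\theta_{*}^{\Sscr}}$ does not transfer back through the broken first line. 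The fix is to apply optimality on the source side instead: with $\epsilon=\sup_{\theta}|\Rscr^{\Sscr}_{\theta}-\Rscr^{\Tscr}_{\theta}|$ one has $\Rscr^{\Tscr}_{\theta_{*}^{\Tscr}}\geq \Rscr^{\Sscr}_{\theta_{*}^{\Tscr}}-\epsilon\geq \Rscr^{\Sscr}_{\theta_{*}^{\Sscr}}-\epsilon$, whence $\Rscr^{\Tscr}_{\theta}-\Rscr^{\Tscr}_{\theta_{*}^{\Tscr}}\leq(\Rscr^{\Sscr}_{\theta}+\epsilon)-(\Rscr^{\Sscr}_{\theta_{*}^{\Sscr}}-\epsilon)=d^{\Sscr}(\theta,\theta_{*}^{\Sscr})+2\epsilon$. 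This is precisely the paper's step, stated in terms of $T^{\Ascr}(\theta)=\tfrac{1}{2}-\Rscr^{\Ascr}_{\theta}$: it uses $T^{\Sscr}(\theta_{*}^{\Sscr})\geq T^{\Sscr}(\theta_{*}^{\Tscr})$, i.e., the optimality of $\theta_{*}^{\Sscr}$ evaluated under the \emph{source} task, and never the minimality of $\theta_{*}^{\Tscr}$ in the direction you attempted. The error is local and the theorem survives, but the step as written is false.
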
 
\begin{proof}
    See Appendix~\ref{app:DST}.
\end{proof}

\section{An Information-Theoretic Bound on The Transfer Excess Risk}\label{sec:infobounds}
In this section, we leverage the measure of dissimilarity $D^{ST}$ between source and target tasks introduced in Section \ref{sec:similarity} to obtain an information-theoretic upper bound on the transfer excess risk \eqref{eq:excess_risk}. Detailed proofs of all results can be found in \cite{jose2022transfer}.

To start, consider the bipartite quantum system comprising of the classical register $X$ reporting the value of input vector $x$ and the quantum register $R_{\theta}$ corresponding to the embedding $\rho_{\theta}(x)$. For a given embedding parameter vector $\theta$, the \emph{classical-quantum state} of the above bipartite system for task $\mathcal{A} \in \{\Sscr,\Tscr\}$ is described by the density matrix
$
    \rho^{\mathcal{A}}_{XR_{\theta}}=\Ebb_{p^{\mathcal{A}}(x)}[\ket{x}\bra{x} \otimes \rho_{\theta}(x)],
$ where $\otimes$ is the Kronecker product, $p^{\Ascr}(x)$ is the relevant marginal of the joint distribution $p^{\Ascr}(c,x)$, and $\{\ket{x}\}$ is an orthonormal basis for the Hilbert space of register $X$, which has dimension equal to the number of possible values for $x$ (see, e.g., \cite{wilde2013quantum}). Then, the \emph{$2$-R\'enyi mutual information (MI)} between the subsystems $X$ and $R_{\theta}$  is defined as  \cite{banchi2021generalization}
\begin{align}\label{eq:renmi}
I^{\Ascr}_2(X;R_{\theta})= 2 \log_2 \Tr \biggl(\sqrt{\sum_x p^{\Ascr}(x)\rho_{\theta}(x)^2}\biggr).
\end{align}

We also define the \textit{Rademacher} complexity of the space  $\Mscr$ of POVM measurements as
\begin{align}
   \mathfrak{R}^{\Ascr}_{\Mscr}\hspace{-0.1cm}=\sup_{\theta \in \Theta} \Ebb_{p^{\Ascr}(c,x)} \Ebb_{p({\sigma})}\hspace{-0.1cm}\biggl[\sup_{ M \in \Mscr}  \sum_{j=1}^{N^{\Ascr}}\frac{\sigma_j \ell_{M,\theta}(c_j,x_j)}{\sqrt{N^{\Ascr}}}  \biggr], \label{eq:radmacher_POVM}
\end{align} and the joint \textit{Rademacher complexity} of the space $\Theta$ of embedding parameters  and of the space $\Mscr$ as
\begin{align}
   \mathfrak{R}^{\Ascr}_{\Theta,\Mscr}\hspace{-0.1cm}=\Ebb_{p^{\Ascr}(c,x)} \Ebb_{p({\sigma})}\hspace{-0.1cm}\biggl[\sup_{\theta\in \Theta, M \in \Mscr}  \sum_{j=1}^{N^{\Ascr}}\frac{\sigma_j \ell_{M,\theta}(c_j,x_j)}{\sqrt{N^{\Ascr}}}  \biggr] \label{eq:cascade_rademacher_main},
\end{align}where the expectation is taken over i.i.d. variables $(c,x) \sim p^{\Ascr}(c,x)$ and over i.i.d. zero-mean and equiprobable Rademacher variables ${\sigma}=(\sigma_1,\hdots,\sigma_{N^{\Ascr}}) \sim p(\sigma)$ with $\sigma_j \in \{+1,-1\}$. The following lemma presents an upper bound on the 
Rademacher complexity  measures \eqref{eq:radmacher_POVM}-\eqref{eq:cascade_rademacher_main}. 
\begin{lemma}\label{lem:rademacher}
Assume that the embedding circuit defines quantum states of the form
$ \rho_{\theta}(x)=U(\theta,x)\vert 0\rangle \langle 0 \vert U(\theta,x)^{\dag}$, where $ U(\theta,x)=\prod_{l=1}^L U_l(\theta_l)S_l(x)$
consists of parameterized unitary gates $U_l(\theta_l)$ as well as encoding gates $S_l(x)$ \cite{schuld2021machine}. The Rademacher complexity measures \eqref{eq:radmacher_POVM} and \eqref{eq:cascade_rademacher_main} can be upper bounded as
 \begin{align}
   \mathfrak{R}^{\Ascr}_{\Mscr}\leq   \mathfrak{R}^{\Ascr}_{\Theta,\Mscr} \leq n c(p^{\Ascr}(x)), \label{eq:ub_jointrademacher}
 \end{align} where $n$ is the dimension of the Hilbert space, and $c(p^{\Ascr}(x))\leq 1$ is a constant that depends on the marginal distribution $p^{\Ascr}(x)$.
\end{lemma}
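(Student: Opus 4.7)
The plan is to treat the two inequalities separately. For $\mathfrak{R}^{\Ascr}_{\Mscr}\leq\mathfrak{R}^{\Ascr}_{\Theta,\Mscr}$, I would invoke the elementary fact $\sup_{\theta}\Ebb_{Z}[g(\theta,Z)]\leq\Ebb_{Z}[\sup_{\theta}g(\theta,Z)]$ for any jointly measurable $g$, applied with $Z=\bigl((c_{j},x_{j})_{j=1}^{N^{\Ascr}},\sigma\bigr)$ and $g(\theta,Z)=\sup_{M\in\Mscr}\sum_{j}\sigma_{j}\ell_{M,\theta}(c_{j},x_{j})/\sqrt{N^{\Ascr}}$. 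This takes care of the first inequality immediately.

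For the main bound $\mathfrak{R}^{\Ascr}_{\Theta,\Mscr}\leq n\,c(p^{\Ascr}(x))$, my strategy exploits the affinity of the loss in $M_{c}$. Writing $\ell_{M,\theta}(c,x)=1-\Tr(M_{c}\rho_{\theta}(x))$, the constant ``$1$'' contributes a $(\theta,M)$-independent term $\sum_{j}\sigma_{j}/\sqrt{N^{\Ascr}}$ whose Rademacher expectation vanishes, so by the symmetry of $\sigma$ the problem reduces to bounding
\begin{equation*}
\Ebb_{p^{\Ascr}(c,x)}\Ebb_{\sigma}\Bigl[\sup_{\theta,M}\frac{1}{\sqrt{N^{\Ascr}}}\sum_{j=1}^{N^{\Ascr}}\sigma_{j}\Tr\bigl(M_{c_{j}}\rho_{\theta}(x_{j})\bigr)\Bigr].
\end{equation*}
Grouping by class label via the matrix-valued Rademacher sums $X_{\theta,c}=\sum_{j:c_{j}=c}\sigma_{j}\rho_{\theta}(x_{j})$ and using $M_{1}=I-M_{0}$ with $0\preceq M_{0}\preceq I$, the inner supremum over $M$ is attained by the projector onto the positive eigenspace of $X_{\theta,0}-X_{\theta,1}$ and equals $\Tr\bigl((X_{\theta,0}-X_{\theta,1})_{+}\bigr)+\Tr(X_{\theta,1})$; the identity $\Tr(Y_{+})=(\|Y\|_{1}+\Tr Y)/2$ then controls this in terms of $\|X_{\theta,0}\|_{1}$ and $\|X_{\theta,1}\|_{1}$ up to an additive contribution $|\sum_{j}\sigma_{j}|$ whose expectation is of order $\sqrt{N^{\Ascr}}$ and absorbs into the final constant.

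The main obstacle is then to bound $\Ebb_{\sigma}[\sup_{\theta\in\Theta}\|X_{\theta,c}\|_{1}]$ uniformly over the possibly continuous parameter set $\Theta$. I would combine two ingredients: first, the elementary dimension inequality $\|Y\|_{1}\leq n\,\|Y\|_{\mathrm{op}}$ valid for any $n\times n$ Hermitian matrix $Y$, which is responsible for the linear factor of $n$ in the announced bound; and second, the pure-state structure $\rho_{\theta}(x)=U(\theta,x)\ket{0}\bra{0}U(\theta,x)^{\dag}$ with $\|\rho_{\theta}(x)\|_{\mathrm{op}}=1$, which, together with a non-commutative Khintchine-type inequality for matrix Rademacher sums, furnishes a $\theta$-uniform operator-norm bound of order $\sqrt{N^{\Ascr}}$ whose prefactor captures how the marginal $p^{\Ascr}(x)$ concentrates on the embedded states and is precisely the quantity $c(p^{\Ascr}(x))\leq 1$. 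Cancelling the $1/\sqrt{N^{\Ascr}}$ normalizer in the Rademacher complexity definition then yields the announced bound $n\,c(p^{\Ascr}(x))$.
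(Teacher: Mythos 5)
Your first inequality and the overall architecture of the second (linearity of the loss in $M$, killing the $(\theta,M)$-independent terms by symmetry of $\sigma$, a factor $n$ from comparing trace norm to operator norm, and a residual quantity $c(p^{\Ascr}(x))\leq 1$ measuring the second moment of the embedded states) are sound and broadly parallel to the paper's proof. Your treatment of the supremum over $M$ via the positive-part projector and $\Tr(Y_{+})=(\lVert Y\rVert_{1}+\Tr Y)/2$ is a valid alternative to the paper's one-line H\"older step $\Tr(MB)\leq\Tr(M)\,\lVert B\rVert_{\mathrm{op}}\leq n\lVert B\rVert_{\mathrm{op}}$ for $0\leq M\leq I$ (and, as a side remark, the additive terms you worry about, $\Tr(X_{\theta,0})+\Tr(X_{\theta,1})=\sum_{j}\sigma_{j}$, are exactly $(\theta,M)$-independent with zero Rademacher mean, so no $O(\sqrt{N^{\Ascr}})$ absorption is needed).

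The genuine gap is the step you yourself flag as ``the main obstacle'': bounding $\Ebb_{\sigma}\bigl[\sup_{\theta\in\Theta}\lVert X_{\theta,c}\rVert\bigr]$ uniformly over a continuum $\Theta$. A non-commutative Khintchine or matrix Bernstein inequality controls $\Ebb_{\sigma}\lVert\sum_{j}\sigma_{j}A_{j}\rVert_{\mathrm{op}}$ for a \emph{fixed} family of matrices $A_{j}$; it does not let you exchange $\Ebb_{\sigma}$ with $\sup_{\theta}$, and asserting that it ``furnishes a $\theta$-uniform bound'' is precisely the claim that needs proof. This is also why your argument never uses the hypothesis $U(\theta,x)=\prod_{l=1}^{L}U_{l}(\theta_{l})S_{l}(x)$, which is in the lemma statement for a reason: the paper resolves the uniformity by building an $\epsilon$-cover of the induced class of density-matrix-valued maps from an $\epsilon/L$-cover of the $k$-qubit unitary group applied layer by layer (cardinality $(7L/\epsilon)^{2^{2k}}$ per layer), reducing the supremum over $\Theta$ to a finite maximum over the cover; only then does it pass to the Schatten-$2$ norm, expand $\Tr(AA^{\dag})$, use that the Rademacher cross terms vanish in expectation, and let $\epsilon\to 0$ to obtain $c(p^{\Ascr}(x))=\sqrt{\Ebb_{p^{\Ascr}(x)}[\sup_{\hat\rho}\Tr(\hat\rho(x)^{2})]}\leq 1$. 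Without a covering (or chaining) device of this kind, your argument does not close; and even for fixed $\theta$, matrix Khintchine would introduce a $\sqrt{\log n}$ factor that is inconsistent with the announced form $n\,c(p^{\Ascr}(x))$ with $c$ depending only on the marginal.
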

\begin{proof}
    See Appendix~\ref{app:cascadeRademachercomplexity}.
\end{proof}

\subsection{Upper Bound on Target-Task Excess Risk With No Source-Task Data}
We first consider a baseline scenario when no data from the source task is available. Only data from the target task is used to jointly optimize quantum embedding $\rho_{\theta}(x)$ and measurement $M$. In this case, the excess risk  \eqref{eq:excess_risk_1} for the target task evaluates as $\Delta \Rscr_{\thetah^{\Tscr}}^{\Tscr}$, where $\thetah^{\Tscr}=\arg \min_{\theta \in \Theta} \min_{M \in \Mscr}\widehat{\Rscr}^{\Tscr}_{\theta,M}$. The  following theorem presents an upper bound on the excess risk.
\begin{theorem}\label{thm:PACbound_joint}
The following upper bound on the excess risk $\Delta \Rscr_{\thetah^{\Tscr}}^{\Tscr}$ for the target task holds with probability at least $1-\delta$, for $\delta \in (0,1)$, with respect to the i.i.d. random draws of data set $\Dscr^{\Tscr}$ from the joint distribution $p^{\Tscr}(c,x)$
\begin{align}
 \Delta \Rscr_{\thetah^{\Tscr}}^{\Tscr} \leq    \frac{2(\mathfrak{R}^{\Tscr}_{\Theta,\Mscr}+ \mathfrak{R}^{\Tscr}_{\Mscr}) }{\sqrt{N^{\Tscr}}} + \sqrt{\frac{2}{N^{\Tscr}} \log \frac{2}{\delta}} \label{eq:PACbound_joint},
\end{align}where  $\mathfrak{R}^{\Tscr}_{\Mscr}$ is bounded as
\begin{align}
    \mathfrak{R}^{\Tscr}_{\Mscr} \leq 0.5 \sqrt{ \sup_{\theta \in \Theta}2^{I_2^{\Tscr}(X;R_{\theta})}}, \label{eq:ub_ub}
\end{align}with $I^{\Tscr}_2(X;R_{\theta})$ denoting the 2-R\'enyi MI in \eqref{eq:renmi}. 
\end{theorem}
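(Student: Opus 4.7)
The plan is to decompose the excess risk $\Delta \Rscr_{\thetah^{\Tscr}}^{\Tscr}=\Rscr^{\Tscr}_{\thetah^{\Tscr},\widehat{M}^{\Tscr}_{\thetah^{\Tscr}}}-\Rscr^{\Tscr}_{\thetah^{\Tscr}}$ into two additive pieces and bound each via Rademacher-based uniform convergence. Adding and subtracting the empirical risk at the jointly trained pair, and using the identity $\widehat{\Rscr}^{\Tscr}_{\thetah^{\Tscr},\widehat{M}^{\Tscr}_{\thetah^{\Tscr}}}=\widehat{\Rscr}^{\Tscr}_{\thetah^{\Tscr}}$, I would write
\begin{equation*}
\Delta \Rscr_{\thetah^{\Tscr}}^{\Tscr}=\underbrace{\bigl[\Rscr^{\Tscr}_{\thetah^{\Tscr},\widehat{M}^{\Tscr}_{\thetah^{\Tscr}}}-\widehat{\Rscr}^{\Tscr}_{\thetah^{\Tscr},\widehat{M}^{\Tscr}_{\thetah^{\Tscr}}}\bigr]}_{T_1}+\underbrace{\bigl[\widehat{\Rscr}^{\Tscr}_{\thetah^{\Tscr}}-\Rscr^{\Tscr}_{\thetah^{\Tscr}}\bigr]}_{T_2}.
\end{equation*}

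For $T_1$, since both the embedding parameter $\thetah^{\Tscr}$ and the POVM $\widehat{M}^{\Tscr}_{\thetah^{\Tscr}}$ depend on the target-task data, I would pass to the uniform bound $T_1\leq \sup_{\theta\in\Theta,M\in\Mscr}[\Rscr^{\Tscr}_{\theta,M}-\widehat{\Rscr}^{\Tscr}_{\theta,M}]$. Because the loss $\ell_{\theta,M}(c,x)$ lies in $[0,1]$, McDiarmid's inequality applied with the standard symmetrization argument yields, with probability at least $1-\delta/2$, the bound $T_1\leq 2\mathfrak{R}^{\Tscr}_{\Theta,\Mscr}/\sqrt{N^{\Tscr}}+\sqrt{\log(2/\delta)/(2N^{\Tscr})}$, where the expected-supremum of the symmetrized process is exactly $\mathfrak{R}^{\Tscr}_{\Theta,\Mscr}$ from \eqref{eq:cascade_rademacher_main}.

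The crux is the treatment of $T_2$. Rather than taking a supremum over $\theta$ (which would collapse the bound back to $\mathfrak{R}^{\Tscr}_{\Theta,\Mscr}$), I would pivot through the deterministic population minimizer $\theta^{*}=\arg\min_{\theta\in\Theta}\Rscr^{\Tscr}_{\theta}$. By the two optimality relations $\widehat{\Rscr}^{\Tscr}_{\thetah^{\Tscr}}\leq\widehat{\Rscr}^{\Tscr}_{\theta^{*}}$ and $\Rscr^{\Tscr}_{\thetah^{\Tscr}}\geq \Rscr^{\Tscr}_{\theta^{*}}$, I get $T_2\leq\widehat{\Rscr}^{\Tscr}_{\theta^{*}}-\Rscr^{\Tscr}_{\theta^{*}}$. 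Using the inequality $|\min_{M}\widehat{f}_M-\min_{M}f_M|\leq\sup_{M}|\widehat{f}_M-f_M|$ then gives $T_2\leq\sup_{M\in\Mscr}[\widehat{\Rscr}^{\Tscr}_{\theta^{*},M}-\Rscr^{\Tscr}_{\theta^{*},M}]$. Since $\theta^{*}$ is data-independent, McDiarmid plus symmetrization at the single parameter value $\theta^{*}$ delivers, with probability at least $1-\delta/2$, $T_2\leq 2\mathfrak{R}^{\Tscr}_{\Mscr,\theta^{*}}/\sqrt{N^{\Tscr}}+\sqrt{\log(2/\delta)/(2N^{\Tscr})}\leq 2\mathfrak{R}^{\Tscr}_{\Mscr}/\sqrt{N^{\Tscr}}+\sqrt{\log(2/\delta)/(2N^{\Tscr})}$, invoking that $\mathfrak{R}^{\Tscr}_{\Mscr}=\sup_{\theta}\mathfrak{R}^{\Tscr}_{\Mscr,\theta}$ by \eqref{eq:radmacher_POVM}. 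A union bound combining the two $\delta/2$-events produces \eqref{eq:PACbound_joint}.

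Finally, the bound \eqref{eq:ub_ub} on $\mathfrak{R}^{\Tscr}_{\Mscr}$ is obtained by invoking the fixed-embedding Rademacher bound of \cite{banchi2021generalization}, which for any $\theta$ shows $\mathfrak{R}^{\Tscr}_{\Mscr,\theta}\leq 0.5\sqrt{2^{I_2^{\Tscr}(X;R_\theta)}}$; taking the supremum over $\theta\in\Theta$ yields the claim. The main obstacle is the data dependence of $\thetah^{\Tscr}$ inside $T_2$; the key idea is to isolate the randomness of $\thetah^{\Tscr}$ by routing through the deterministic $\theta^{*}$, which allows the second term to be controlled by the strictly smaller single-$\theta$ Rademacher $\mathfrak{R}^{\Tscr}_{\Mscr}$ rather than the joint complexity $\mathfrak{R}^{\Tscr}_{\Theta,\Mscr}$.
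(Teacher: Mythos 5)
Your proof is correct and follows essentially the same route as the paper's: both arguments reduce the excess risk to the sum of a uniform generalization gap over $(\theta,M)$ at the empirically trained pair (controlled by $\mathfrak{R}^{\Tscr}_{\Theta,\Mscr}$) and a generalization gap at the data-independent population optimizer $\theta^{\Tscr}_{*}$ over $M$ only (controlled by $\mathfrak{R}^{\Tscr}_{\Mscr}$), then apply McDiarmid/symmetrization to each, a union bound with $\delta/2$, and the fixed-embedding R\'enyi bound of \cite{banchi2021generalization} for \eqref{eq:ub_ub}. Your pivot through $\theta^{*}$ via the two optimality relations is just a repackaging of the paper's three-term telescoping decomposition whose middle term is nonpositive.
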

\begin{proof}
    See Appendix~\ref{app:nosource}.
\end{proof}

The upper bound in \eqref{eq:PACbound_joint} shows that, in the absence of source-task data, the sample complexity scales (at most) proportionally to the sum  $\mathfrak{R}^{\Tscr}_{\Theta,\Mscr}+ \mathfrak{R}^{\Tscr}_{\Mscr}$.


\subsection{Upper Bound on Transfer Excess Risk}

We now present an upper bound on the transfer excess risk \eqref{eq:excess_risk} for the case in which source-task data is available.
\begin{theorem}\label{thm:mainbound}
For any constant $D^{ST}$ satisfying \eqref{eq:DST_def},  the following upper bound on the transfer excess risk holds with probability at least $1-\delta$, for $\delta\in(0,1)$, with respect to the i.i.d random draws of data sets $\mathcal{D}^{\Tscr}$ and $\mathcal{D}^{\Sscr}$ from the respective joint distributions $p^{\mathcal{T}}(c,x)$ and $p^{\mathcal{S}}(c,x)$
\begin{align}
   &\Delta \Rscr^{\Sscr \rightarrow \Tscr} \leq \frac{4 \mathfrak{R}^{\Tscr}_{\Mscr}}{\sqrt{N^{\Tscr}}} +\sqrt{\frac{2}{N^{\Tscr}} \log \frac{3}{\delta}}  + D^{ST} \non \\&+ \frac{2(\mathfrak{R}^{\Sscr}_{\Theta,\Mscr}+ \mathfrak{R}^{\Sscr}_{\Mscr})}{\sqrt{N^{\Sscr}}} + \sqrt{\frac{2}{N^{\Sscr}} \log \frac{3}{\delta}} \label{eq:PACbound}, 
\end{align} where $\mathfrak{R}^{\Ascr}_{\Mscr}$, for $\Ascr \in \{\Sscr,\Tscr\}$, is bounded as in \eqref{eq:ub_ub}.
\end{theorem}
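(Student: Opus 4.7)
The plan is to decompose the transfer excess risk into two contributions, one from learning the POVM on target data with the pre-trained embedding held fixed, and one from the suboptimality of $\thetah$ on the target task. Writing $\theta^{\Tscr}_* \in \arg\min_{\theta \in \Theta} \Rscr^{\Tscr}_{\theta}$, I split
\begin{align*}
\Delta \Rscr^{\Sscr \rightarrow \Tscr} \;=\; \underbrace{\bigl(\Rscr^{\Tscr}_{\thetah, \widehat{M}^{\Tscr}_{\thetah}} - \Rscr^{\Tscr}_{\thetah}\bigr)}_{A} \;+\; \underbrace{\bigl(\Rscr^{\Tscr}_{\thetah} - \Rscr^{\Tscr}_{\theta^{\Tscr}_*}\bigr)}_{B}
\end{align*}
and bound $A$ using only the target sample, $B$ via the dissimilarity bound followed by the source sample, and finally combine the two high-probability statements by a union bound.

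For $A$, the key observation is that $\thetah$ is measurable with respect to $\Dscr^{\Sscr}$ and hence independent of the target sample $\Dscr^{\Tscr}$. Conditioning on $\thetah$, the classical ERM argument gives $A \leq 2 \sup_{M \in \Mscr} \big|\widehat{\Rscr}^{\Tscr}_{\thetah, M} - \Rscr^{\Tscr}_{\thetah, M}\big|$; symmetrization bounds its expectation by $2\mathfrak{R}^{\Tscr}_{\Mscr}/\sqrt{N^{\Tscr}}$, with the $\sup_{\theta \in \Theta}$ in the definition \eqref{eq:radmacher_POVM} of $\mathfrak{R}^{\Tscr}_{\Mscr}$ absorbing the randomness of $\thetah$, and McDiarmid's inequality applied to the $[0,1]$-bounded loss \eqref{eq:loss} delivers the first two summands of \eqref{eq:PACbound} with probability at least $1-\delta/3$ over $\Dscr^{\Tscr}$.

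For $B$, I apply Definition~\ref{def:DST} at the random point $\thetah$: since $\Rscr^{\Tscr}_{\thetah} \geq \Rscr^{\Tscr}_{\theta^{\Tscr}_*}$ and $\Rscr^{\Sscr}_{\thetah} \geq \Rscr^{\Sscr}_{\theta^{\Sscr}_*}$, the absolute values in \eqref{eq:distance_measure} collapse and
\begin{align*}
B \;=\; d^{\Tscr}(\thetah, \theta^{\Tscr}_*) \;\leq\; d^{\Sscr}(\thetah, \theta^{\Sscr}_*) + D^{ST} \;=\; \bigl(\Rscr^{\Sscr}_{\thetah} - \min_{\theta \in \Theta} \Rscr^{\Sscr}_{\theta}\bigr) + D^{ST}.
\end{align*}
Since $\thetah$ is the joint source-sample ERM and $\Rscr^{\Sscr}_{\thetah} = \min_M \Rscr^{\Sscr}_{\thetah, M} \leq \Rscr^{\Sscr}_{\thetah, \widehat{M}^{\Sscr}_{\thetah}}$, the remaining gap is upper bounded by the source-task analog of the quantity controlled by Theorem~\ref{thm:PACbound_joint}. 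Replicating that proof on $\Dscr^{\Sscr}$, which splits the excess risk into a joint-space uniform-convergence error (bounded by $\mathfrak{R}^{\Sscr}_{\Theta,\Mscr}$) and a POVM-only convergence error at the embedding achieving $\min_\theta\Rscr^{\Sscr}_\theta$ (bounded by $\mathfrak{R}^{\Sscr}_{\Mscr}$), together with McDiarmid concentration, produces the remaining summands of \eqref{eq:PACbound} with probability at least $1-2\delta/3$ over $\Dscr^{\Sscr}$.

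The main technical care is in the bookkeeping of data dependencies. Because $\thetah$ is random through $\Dscr^{\Sscr}$, the bound on $A$ requires the worst-case-over-$\theta$ form of $\mathfrak{R}^{\Tscr}_{\Mscr}$, while the bound on $B$ demands the joint Rademacher complexity $\mathfrak{R}^{\Sscr}_{\Theta,\Mscr}$ to handle the coupling between $\thetah$ and the source sample it is estimated from. Independence of $\Dscr^{\Sscr}$ and $\Dscr^{\Tscr}$ then allows a three-way union bound (one concentration inequality for $A$, two arising inside the source-side replay of Theorem~\ref{thm:PACbound_joint}), each at level $\delta/3$, to yield the stated overall confidence $1-\delta$ and produce \eqref{eq:PACbound}.
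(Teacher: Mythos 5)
Your proposal is correct and follows essentially the same route as the paper: the split into $A=\Rscr^{\Tscr}_{\thetah,\widehat{M}^{\Tscr}_{\thetah}}-\Rscr^{\Tscr}_{\thetah}$ and $B=d^{\Tscr}(\thetah,\theta^{\Tscr}_{*})$ is exactly the paper's Lemma on the transfer excess risk, with $A\leq 2\sup_{M}|\widehat{\Rscr}^{\Tscr}_{\thetah,M}-\Rscr^{\Tscr}_{\thetah,M}|$ handled by the fixed-$\theta$ Rademacher bound (the $\sup_{\theta}$ in $\mathfrak{R}^{\Tscr}_{\Mscr}$ absorbing the $\Dscr^{\Sscr}$-measurable $\thetah$), and $B$ handled via Definition~\ref{def:DST} followed by the source-side replay of Theorem~\ref{thm:PACbound_joint} yielding $\sup_{\theta}\Gscr^{\Sscr}_{\theta}+\Gscr^{\Sscr}_{\theta^{\Sscr}_{*}}$ and a three-way union bound at level $\delta/3$. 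Your explicit conditioning argument for the independence of $\thetah$ from $\Dscr^{\Tscr}$ is, if anything, slightly more careful than the paper's presentation.
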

\begin{proof}
    See Appendix~\ref{app:withsource}.
\end{proof}
The bound \eqref{eq:PACbound} illustrates the advantage of transfer learning in reducing the sample complexity for the target task. In fact, if abundant data is available from the source task (i.e., if $N^{\Sscr} \rightarrow \infty$) and if the source and target tasks are sufficiently similar so that $D^{ST}$ is small, the sample complexity of the target task is proportional to  $4\mathfrak{R}^{\Tscr}_{\Mscr}$. By inequality \eqref{eq:ub_jointrademacher}, this is smaller than the scaling $2(\mathfrak{R}^{\Tscr}_{\Theta,\Mscr}+ \mathfrak{R}^{\Tscr}_{\Mscr})$ obtained in Theorem~\ref{thm:PACbound_joint} when no source-task data is available. 
\section{Example and Discussion}\label{sec:example}
In this section, we consider a source task and a target task with equiprobable class label $c \in \{0,1\}$. For each class $c \in \{0,1\}$, we obtain the discrete-valued input $x$ by finely quantizing  a  continuous-valued feature input $\tilde{x} \in \Real$ so that the discrete sum in  \eqref{eq:renmi} can be evaluated via numerical integration \cite{banchi2021generalization}. For the source task, the feature $\tilde{x}$ is Gaussian distributed as   $\Nscr(\tilde{x}|\mu^{\Sscr}_c,\sigma^2)$ with mean $\mu^{\Sscr}_c\in \Real$ and variance $\sigma^2$; while,  for the target task, we have the per-class Gaussian distribution $\Nscr(\tilde{x}|\mu^{\Tscr}_c,\sigma^2)$ with mean $\mu^{\Tscr}_c\in \Real$, generally different from that of source task, and the same variance $\sigma^2$.

The embedding circuit maps the classical input $x$ to the rank-1 density matrix $\rho_{\theta}(x)=\ket{x}\bra{x}$, with the pure quantum state $\ket{x}$  given as
\begin{align}
   \ket{x}=U_{\theta}(x) \ket{0}, \textrm{ with } U_{\theta}(x)=R_X(x)\mathrm{Rot}_{\theta} R_X(x),
\end{align} where $U_{\theta}(x)$ is a unitary matrix  parameterized by the angles  $\theta=(\theta_1,\theta_2,\theta_3) \in [0,2\pi]^3$, which constitutes the embedding PQC  (see, e.g., \cite{schuld2021machine}). The operation of the embedding circuit is  involves the Pauli-X rotation $R_X(x)$ (defined as in \cite[Eq. (3.45)]{schuld2021machine})
and the general rotation $\Rot_{\theta}$ defined as in \cite[Eq. (3.48)]{schuld2021machine}.

    \begin{figure}
        \centering
       \includegraphics[scale=0.39,trim=2.3in 0.7in 2.6in 0.8in ,clip=true]{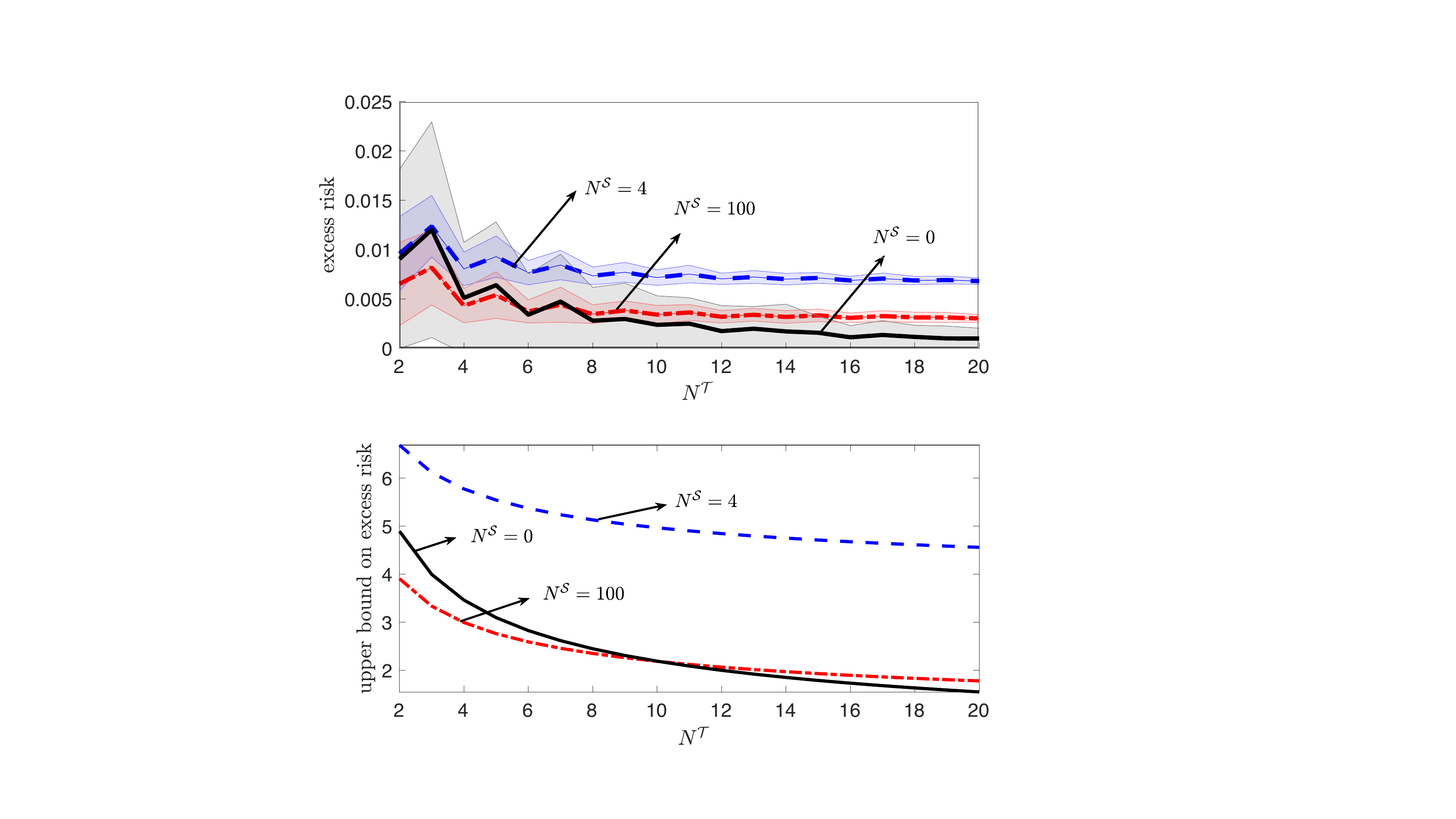}
        \caption{Transfer excess risk  in \eqref{eq:excess_risk} (top) and upper bound \eqref{eq:PACbound} (bottom) as a function of the target-task samples $N^{\Tscr}$ for varying values of source-task samples. $N^{\Sscr}=0$ corresponds to the excess risk $\Delta \Rscr^{\Tscr}_{\thetah^{\Tscr}}$ and the upper bound \eqref{eq:PACbound_joint}. }
      \label{fig:samplesperformance}
      \vspace{-0.3cm}
    \end{figure}

In Figure~\ref{fig:samplesperformance}, we plot the transfer excess risk $\Delta \Rscr^{\Sscr \rightarrow \Tscr}$ (top figure), along with the corresponding upper bound derived in \eqref{eq:PACbound} (bottom figure) as a function of the number of target task samples $N^{\Tscr}$ for varying values of source task samples $N^{\Sscr}$. Note that $N^{\Sscr}=0$ corresponds to the excess risk $\Delta \Rscr^{\Tscr}_{\thetah^{\Tscr}}$ and the upper bound \eqref{eq:PACbound_joint}. Other parameters are set as $\delta=0.5$, $\sigma^2=0.11$, $\mu^{\Sscr}_0=1$, $\mu^{\Sscr}_1=-1$, $\mu^{\Tscr}_0=1.5$,  and $\mu^{\Tscr}_1=-0.5$.  The transfer excess risk $\Delta \Rscr^{\Sscr \rightarrow \Tscr}$  is a random variable, which is evaluated by drawing multiple pairs of data sets $(\mathcal{D}^{\Sscr},\mathcal{D}^{\Tscr})$ from their respective joint distributions $p^{\Sscr}(c,x)$ and $p^{\Tscr}(c,x)$. The thick lines in the top figure correspond to the median of the resulting empirical distribution, while the shaded areas represent its spread.

The figure shows that the upper bound \eqref{eq:PACbound}, while numerically loose (as is common for related information-theoretic bounds in classical machine learning \cite{wu2020information,jose2021information}),  predicts well the regime where transfer learning is advantageous. Comparing the case when no source task is available (i.e., $N^{\Sscr}=0$) to when abundant source-task data is available for transfer learning (i.e., $N^{\Sscr}=100$), Fig~\ref{fig:samplesperformance} shows that transfer learning can achieve a smaller excess risk when limited data are available from the target task ($N^{\Tscr}<8$). This advantage vanishes when target task-data become increasingly available, in which case the contribution of the task dissimilarity measure $D^{ST}$ to transfer excess risk outweighs the other terms in \eqref{eq:PACbound_joint}. This also explains the non-vanishing behaviour of transfer excess risk in Fig ~\ref{fig:samplesperformance} in the limit as $N^{\Sscr}$ and  $N^{\Tscr}\rightarrow \infty$ grow large.




\begin{figure}
    \centering
    \includegraphics[scale=0.3,trim=1.6in 1in 2.6in 0.9in ,clip=true]{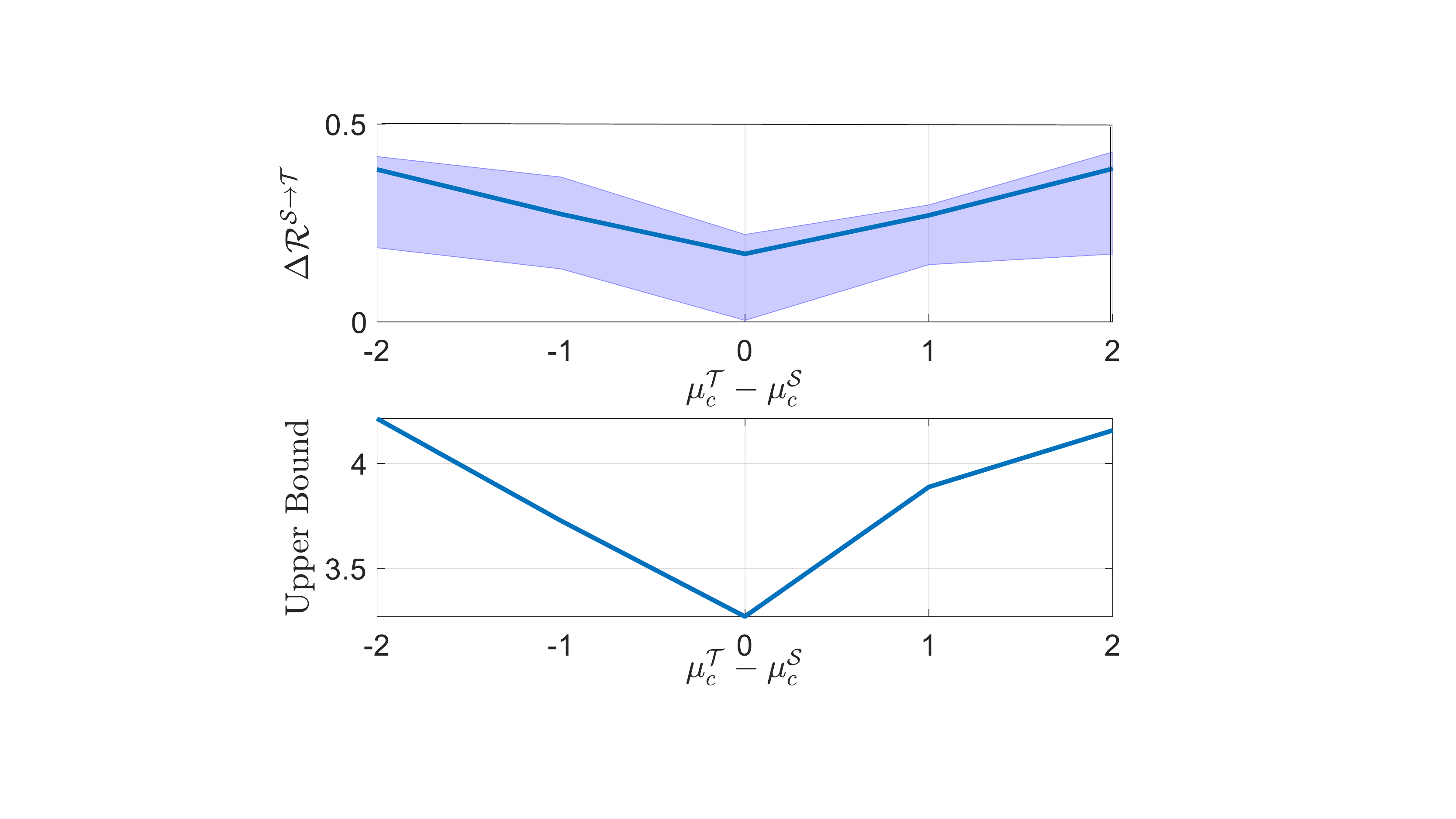}
    \caption{Transfer excess risk $\Delta{\Rscr}^{\Sscr \rightarrow \Tscr}$ in (\ref{eq:excess_risk}) (top) and upper bound \eqref{eq:PACbound} (bottom) as a function of the deviation, $\mu^{\Tscr}_c-\mu^{\Sscr}_c$, of the means of source and target tasks.} 
    \label{fig:domaindivergence}
    \vspace{-0.6cm}
\end{figure}

The impact of the dissimilarity between the two tasks is further elaborated on in Figure~\ref{fig:domaindivergence}, which illustrates  the transfer excess risk $\Delta \Rscr^{\Sscr \rightarrow \Tscr}$ (top) and the corresponding upper bound in \eqref{eq:PACbound} (bottom) as a function of the difference $\mu^{\Tscr}_c-\mu^{\Sscr}_c$ between the means of the input data under the target and source tasks for both classes $c \in \{0,1\}$. We fix $\mu^{\Sscr}_0=1$ and $\mu^{\Sscr}_1=-2$, $N^{\Tscr}=4$,  $N^{\Sscr}=10$, $\sigma^2=1$, and $\delta=0.9$. As can be seen, when the difference between the means is zero, \ie, when the source and target tasks coincide, the transfer excess risk, and the corresponding upper bound, are minimized. Conversely, when the mean of the target task deviates from that of the source task, the transfer excess risk increases, as correctly predicted by the upper bound. 
\bibliography{ref}
\bibliographystyle{IEEEtran} 
\appendices
\section{Proof of Theorem~\ref{thm:PACbound_joint}}\label{app:nosource}
Throughout the appendix, we make use of the following convention. For fixed $\theta \in \Theta$, and task $\Ascr \in \{\Sscr,\Tscr\}$,  we define $ M^{\Ascr}_{\theta}=\arg \min_{M \in \Mscr} \Rscr^{\Ascr}_{\theta,M}$ as the optimal measurement minimizing the expected risk, and $ \widehat{M}^{\Ascr}_{\theta}=\arg \min_{M \in \Mscr} \widehat{\Rscr}^{\Ascr}_{\theta,M}$ as the optimal measurement minimizing the empirical training loss.

To obtain the required upper bound in \eqref{eq:PACbound_joint}, we start by decomposing the excess risk $\Delta \Rscr_{\thetah^{\Tscr}}^{\Tscr}$ as follows:
\begin{align}
    \Delta \Rscr_{\thetah^{\Tscr}}^{\Tscr} &= \Rscr^{\Tscr}_{\thetah^{\Tscr},\widehat{M}_{\thetah^{\Tscr}}^{\Tscr}} - \widehat{\Rscr}^{\Tscr}_{\thetah^{\Tscr},\widehat{M}_{\thetah^{\Tscr}}^{\Tscr}}+\widehat{\Rscr}^{\Tscr}_{\thetah^{\Tscr},\widehat{M}_{\thetah^{\Tscr}}^{\Tscr}}-\widehat{\Rscr}^{\Tscr}_{\theta^{\Tscr}_{*},{M}_{\theta^{\Tscr}_{*}}^{\Tscr}} \non \\
    &+ \widehat{\Rscr}^{\Tscr}_{\theta^{\Tscr}_{*},{M}^{\Tscr}_{\theta^{\Tscr}_{*}}}-{\Rscr}^{\Tscr}_{\theta^{\Tscr}_{*},{M}^{\Tscr}_{\theta^{\Tscr}_{*}}} \non \\
    & \stackrel{(a)}{\leq}\Rscr^{\Tscr}_{\thetah^{\Tscr},\widehat{M}^{\Tscr}_{\thetah^{\Tscr}}} - \widehat{\Rscr}^{\Tscr}_{\thetah^{\Tscr},\widehat{M}^{\Tscr}_{\thetah^{\Tscr}}}+ \widehat{\Rscr}^{\Tscr}_{\theta^{\Tscr}_{*},{M}^{\Tscr}_{\theta^{\Tscr}_{*}}}-{\Rscr}^{\Tscr}_{\theta^{\Tscr}_{*},{M}^{\Tscr}_{\theta^{\Tscr}_{*}}}\non \\
    &\leq \sup_{\theta \in \Theta} \Gscr^{\Tscr}_{\theta}(\Dscr^{\Tscr}) + \Gscr^{\Tscr}_{\theta^{\Tscr}_{*}}(\Dscr^{\Tscr}), \label{eq:new_1}
\end{align} where \begin{align}
    \Gscr^{\Ascr}_{\theta}(\Dscr^{\Ascr})=\sup_{M \in \Mscr}|\Rscr^{\Ascr}_{\theta,M}-\widehat{\Rscr}^{\Ascr}_{\theta,M}| \label{eq:generror}\end{align} is the generalization error of task $\Ascr$ with fixed embedding parameter $\theta \in \Theta$. The inequality in $(a)$ follows by the inequality $\widehat{\Rscr}^{\Tscr}_{\thetah^{\Tscr},\widehat{M}^{\Tscr}_{\thetah^{\Tscr}}}\leq \widehat{\Rscr}^{\Tscr}_{\theta^{\Tscr}_{*},{M}^{\Tscr}_{\theta^{\Tscr}_{*}}}$.

We now separately upper bound each of the terms in \eqref{eq:new_1} with high probability with respect to random draws of the dataset $\Dscr^{\Tscr}$, and then combine the resulting bounds via union bound. To this end, note
 that the loss function  $\ell_{M,\theta}(c,x)$ in \eqref{eq:loss} is $[0,1]$-bounded,  and hence the classical Radmacher complexity-based generalization bound  gives that 
with probability at least $1-\delta'$ over the random draw of data $\Dscr^{\Ascr}\sim p^{\Ascr}(c,x)^{\otimes N^{\Ascr}}$, for $\Ascr \in \{\Sscr,\Tscr\}$, the following inequality holds \cite{shalev2014understanding}
    \begin{align}
      \Gscr^{\Ascr}_{\theta}(\Dscr^{\Ascr})
      &\leq 2 \frac{\mathfrak{R}^{\Ascr}_{\theta,\Mscr} }{\sqrt{N^{\Ascr}}} + \sqrt{\frac{1}{2N^{\Ascr}} \log \Bigl(\frac{1}{\delta'}\Bigr)} \label{eq:int_2}\\
      &\leq 2  \frac{\mathfrak{R}^{\Ascr}_{\Mscr} }{\sqrt{N^{\Ascr}}}+ \sqrt{\frac{1}{2N^{\Ascr}} \log \Bigl(\frac{1}{\delta'}\Bigr)}
      \label{eq:pointer3},
    \end{align} where
    \begin{align}
   \mathfrak{R}^{\Ascr}_{\theta,\Mscr}\hspace{-0.1cm}= \Ebb_{p^{\Ascr}(c,x)} \Ebb_{p({\sigma})}\hspace{-0.1cm}\biggl[\sup_{ M \in \Mscr}  \sum_{j=1}^{N^{\Ascr}}\frac{\sigma_j \ell_{M,\theta}(c_j,x_j)}{\sqrt{N^{\Ascr}}}  \biggr], 
\end{align} and $\mathfrak{R}^{\Ascr}_{\Mscr} =\sup_{\theta \in \Theta}\mathfrak{R}^{\Ascr}_{\theta,\Mscr}$ is defined in \eqref{eq:radmacher_POVM}. Furthermore, it  follows from \cite[Thm. 2]{banchi2021generalization} that for binary classification with fixed $\theta \in \Theta$, we have
    \begin{align}
      \mathfrak{R}^{\Ascr}_{\theta,\Mscr} \leq  \frac{1}{2} \sqrt{ 2^{I_2^{\Ascr}(X;R_{\theta})}}  \label{eq:ub_1},
    \end{align} where $I_2^{\Ascr}(X;R_{\theta})$ is the 2-Renyi MI between subsystems $X$ and $R_{\theta}$ of task $\Ascr$ for fixed $\theta \in \Theta$. Using \eqref{eq:ub_1} and \eqref{eq:pointer3} in \eqref{eq:new_1} gives a high probability bound on $\Gscr^{\Tscr}_{\theta^{\Tscr}_{*}}(\Dscr^{\Tscr})$ which holds with probability at least $1-\delta'$ for task $\Tscr$.
    
The term $\sup_{\theta \in \Theta} \Gscr^{\Ascr}_{\theta}(\Dscr^{\Ascr})$ can be bounded as in \eqref{eq:int_2}. This gives that with probability at least $1-\delta'$ over the random draw of data $\Dscr^{\Ascr} \sim p^{\Ascr}(c,x)^{\otimes N^{\Ascr}}$, we have
    \begin{align}
     \sup_{\theta \in \Theta} \Gscr^{\Ascr}_{\theta}(\Dscr^{\Ascr}) \leq 2 \frac{\mathfrak{R}^{\Ascr}_{\Theta,\Mscr} }{\sqrt{N^{\Ascr}}} +   \sqrt{\frac{1}{2N^{\Ascr}} \log \Bigl(\frac{1}{\delta'}\Bigr)} \label{eq:rad_ub}, 
     \end{align} where $\mathfrak{R}^{\Ascr}_{\Theta,\Mscr}$ is defined in \eqref{eq:cascade_rademacher_main}. Now, combining the upper bounds obtained on each of the terms in \eqref{eq:new_1} via union bound with the choice of $\delta'=\delta/2$ yields the  bound of \eqref{eq:PACbound_joint}. 
\section{Proof of Theorem~\ref{thm:mainbound}}\label{app:withsource}
To derive the upper bound in Theorem~\ref{thm:mainbound}, we make use of the following auxiliary lemma, the proof of which can be found in Appendix~\ref{app:proof_auxiliary}. 
\begin{lemma}\label{lem:auxiliarylemma}
Let $|\Theta|>1$. For any scalar constant $D^{ST}$ satisfying \eqref{eq:DST_def}, the following upper bound on the transfer excess risk holds
\begin{align}
    \Delta\Rscr^{\Sscr \rightarrow \Tscr} &\leq 2 \Gscr_{\thetah}^{\Tscr}(\Dscr^{\Tscr})+ d^{\Sscr}(\thetah, \theta_{*}^{\Tscr}) +D^{ST} \label{eq:bound_1}\\
    &\leq 2 \Gscr_{\thetah}^{\Tscr}(\Dscr^{\Tscr})+ \sup_{\theta \in \Theta}\Gscr_{\theta}^{\Sscr}(\Dscr^{\Sscr})+\Gscr_{\theta^{\Sscr}_{*}}^{\Sscr}(\Dscr^{\Sscr}) +D^{ST}, \label{eq:bound_2}
\end{align} where $\Gscr_{\theta}^{\Ascr}(\Dscr^{\Ascr})$ is defined as in \eqref{eq:generror}.
Furthermore, if $\Theta=\{\theta\}$ is a singleton set, the bound on  transfer excess risk can be tightened as
\begin{align}
  \Delta\Rscr^{\Sscr \rightarrow \Tscr} &\leq 2 \Gscr_{\theta}^{\Tscr}(\Dscr^{\Tscr})  \label{eq:bound_3}.
\end{align}
\end{lemma}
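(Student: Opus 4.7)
The plan is to isolate the target-task excess risk at the fixed pre-trained embedding $\thetah$ from the deterministic mismatch between $\thetah$ and the target-optimal parameter $\theta^{\Tscr}_{*}$, via the decomposition
\begin{equation*}
\Delta\Rscr^{\Sscr \rightarrow \Tscr} = \bigl[\Rscr^{\Tscr}_{\thetah,\widehat{M}^{\Tscr}_{\thetah}} - \Rscr^{\Tscr}_{\thetah}\bigr] + \bigl[\Rscr^{\Tscr}_{\thetah} - \Rscr^{\Tscr}_{\theta^{\Tscr}_{*}}\bigr].
\end{equation*}

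For the first bracket, I would replay the argument used in the proof of Theorem~\ref{thm:PACbound_joint} at the fixed embedding $\thetah$: insert and subtract $\widehat{\Rscr}^{\Tscr}_{\thetah,\widehat{M}^{\Tscr}_{\thetah}}$ and $\widehat{\Rscr}^{\Tscr}_{\thetah,M^{\Tscr}_{\thetah}}$, exploit that $\widehat{M}^{\Tscr}_{\thetah}$ minimizes the empirical target risk at the fixed $\thetah$ (so $\widehat{\Rscr}^{\Tscr}_{\thetah,\widehat{M}^{\Tscr}_{\thetah}} \leq \widehat{\Rscr}^{\Tscr}_{\thetah,M^{\Tscr}_{\thetah}}$), and absorb the two residual terms into the fixed-embedding generalization error $\Gscr^{\Tscr}_{\thetah}(\Dscr^{\Tscr})$, yielding the summand $2\Gscr^{\Tscr}_{\thetah}(\Dscr^{\Tscr})$. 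For the second bracket, since $\theta^{\Tscr}_{*}$ minimizes $\Rscr^{\Tscr}_{\theta}$, this equals $d^{\Tscr}(\thetah,\theta^{\Tscr}_{*})$ by Definition~\ref{def:taskbaseddistance}, and the $D^{ST}$-dissimilarity inequality \eqref{eq:DST_def} applied at $\theta=\thetah$ then bounds it by $d^{\Sscr}(\thetah,\theta^{\Sscr}_{*}) + D^{ST}$, giving \eqref{eq:bound_1}.

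To pass from \eqref{eq:bound_1} to \eqref{eq:bound_2}, I would process the remaining task-based distance by a mirrored decomposition on the source side: write $d^{\Sscr}(\thetah,\theta^{\Sscr}_{*}) = \Rscr^{\Sscr}_{\thetah} - \Rscr^{\Sscr}_{\theta^{\Sscr}_{*}}$, insert and subtract $\widehat{\Rscr}^{\Sscr}_{\thetah}$ and $\widehat{\Rscr}^{\Sscr}_{\theta^{\Sscr}_{*}}$, and use that $\thetah$ minimizes $\widehat{\Rscr}^{\Sscr}_{\theta}$ over $\theta \in \Theta$ to kill the empirical-minus-empirical difference. The residual $\Rscr^{\Sscr}_{\thetah} - \widehat{\Rscr}^{\Sscr}_{\thetah}$ must be controlled by a uniform bound $\sup_{\theta \in \Theta}\Gscr^{\Sscr}_{\theta}(\Dscr^{\Sscr})$ because $\thetah$ is a data-dependent random variable (this is where the hypothesis $|\Theta|>1$ is relevant), while $\widehat{\Rscr}^{\Sscr}_{\theta^{\Sscr}_{*}} - \Rscr^{\Sscr}_{\theta^{\Sscr}_{*}}$ is bounded pointwise by $\Gscr^{\Sscr}_{\theta^{\Sscr}_{*}}(\Dscr^{\Sscr})$ since $\theta^{\Sscr}_{*}$ is deterministic.

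For the singleton case $\Theta = \{\theta\}$, pre-training is vacuous, so $\thetah = \theta = \theta^{\Tscr}_{*}$ and the second bracket in the initial decomposition vanishes identically, leaving only the $2\Gscr^{\Tscr}_{\theta}(\Dscr^{\Tscr})$ bound on the first bracket and recovering \eqref{eq:bound_3}. The main subtlety I anticipate is the bookkeeping around the nested ERM structure on both source and target sides: each $\Gscr^{\Ascr}_{\theta}(\Dscr^{\Ascr})$ is by construction a supremum over $M\in\Mscr$ at fixed $\theta$, and care is needed to absorb the gap between $\Rscr^{\Ascr}_{\theta}$ (defined via $M^{\Ascr}_{\theta}$) and $\widehat{\Rscr}^{\Ascr}_{\theta}$ (defined via $\widehat{M}^{\Ascr}_{\theta}$) into this POVM-level supremum, so that no additional joint uniform control over $(\theta,M)$ is introduced beyond the single $\sup_{\theta\in\Theta}\Gscr^{\Sscr}_{\theta}(\Dscr^{\Sscr})$ that is unavoidable from the randomness of $\thetah$.
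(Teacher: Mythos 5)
Your proposal is correct and follows essentially the same route as the paper: the same decomposition of the transfer excess risk into the fixed-$\thetah$ estimation error (bounded by $2\Gscr^{\Tscr}_{\thetah}(\Dscr^{\Tscr})$ via the ERM optimality of $\widehat{M}^{\Tscr}_{\thetah}$) plus $d^{\Tscr}(\thetah,\theta^{\Tscr}_{*})$, followed by the $D^{ST}$-dissimilarity inequality and the same source-side decomposition exploiting that $\thetah$ minimizes the empirical source loss, with the uniform term $\sup_{\theta\in\Theta}\Gscr^{\Sscr}_{\theta}(\Dscr^{\Sscr})$ absorbing the data-dependence of $\thetah$ and the pointwise term $\Gscr^{\Sscr}_{\theta^{\Sscr}_{*}}(\Dscr^{\Sscr})$ handling the deterministic $\theta^{\Sscr}_{*}$. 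The only (immaterial) difference is bookkeeping on the source side, where you insert the empirical minimum $\widehat{\Rscr}^{\Sscr}_{\theta^{\Sscr}_{*}}$ rather than the paper's $\widehat{\Rscr}^{\Sscr}_{\theta^{\Sscr}_{*},M^{\Sscr}_{\theta^{\Sscr}_{*}}}$; note also that your derived bound correctly features $d^{\Sscr}(\thetah,\theta^{\Sscr}_{*})$, which matches the paper's proof and reveals the $\theta^{\Tscr}_{*}$ in the printed statement of \eqref{eq:bound_1} as a typo.
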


We are now ready to state the proof of Theorem~\ref{thm:mainbound}. For the case when $|\Theta|>1$, the  upper bound  \eqref{eq:PACbound} is obtained by 
upper bounding each of the first three terms in  \eqref{eq:bound_2} with high probability with respect to the training data sets, and then combining the resulting bounds via the union bound.

To this end, note that, for fixed pre-trained embedding parameter $\thetah \in \Theta$, the term $\Gscr^{\Tscr}_{\thetah}(\Dscr^{\Tscr})$ in \eqref{eq:bound_2} can be upper bounded with high probability over random draws of data sets $\Dscr^{\Tscr}$ as in \eqref{eq:pointer3} by using \eqref{eq:ub_1}. In a similar way, for fixed $\theta^{\Sscr}_{*} \in \Theta$, the term $\Gscr^{\Sscr}_{\theta^{\Sscr}_{*}}(\Dscr^{\Sscr})$ can be upper bounded with high probability over random draws of data sets $\Dscr^{\Sscr}$. To upper bound  the term $\sup_{\theta \in \Theta}\Gscr_{\theta}^{\Sscr}(\Dscr^{\Sscr})$ which holds with high probability with respect to random draws of data sets $\Dscr^{\Sscr}$, we use \eqref{eq:rad_ub}. Finally, combining the upper bounds obtained on each of the first three terms in \eqref{eq:bound_2} via  union bound with the choice of $\delta'=\delta/3$ yields the required bound in \eqref{eq:PACbound}.
 

Finally, for the case when $|\Theta|=1$, we upper bound \eqref{eq:bound_3} as in \eqref{eq:int_2} to get the inequality \eqref{eq:bound_3}.
    \section{Proof of Lemma~\ref{lem:auxiliarylemma}}\label{app:proof_auxiliary}

To obtain an upper bound on transfer excess risk,  we start by
decomposing it as
\begin{align}
 \Delta\Rscr^{\Sscr \rightarrow \Tscr}&= \underbrace{\Rscr^{\Tscr}_{\thetah,\widehat{M}^{\Tscr}_{\thetah}, }-\Rscr^{\Tscr}_{\thetah,M^{\Tscr}_{\thetah}}}_{\Escr_{\Tscr}(\thetah)}+ \Rscr^{\Tscr}_{\thetah,M^{\Tscr}_{\thetah}, }-\Rscr^{\Tscr}_{\theta_{*}^{\Tscr},M^{\Tscr}_{ \theta_{*}^{\Tscr}}}  \\
& \stackrel{(a)}{\leq} 2\Gscr^{\Tscr}_{\thetah}(\Dscr^{\Tscr})+ \Rscr^{\Tscr}_{\thetah,M^{\Tscr}_{\thetah}}-\Rscr^{\Tscr}_{\theta_{*}^{\Tscr},M^{\Tscr}_{\theta_{*}^{\Tscr}}} \\
&=2\Gscr^{\Tscr}_{\thetah}(\Dscr^{\Tscr})+ d^{\Tscr}(\thetah,\theta^{\Tscr}_{*}),
\label{eq:pointer1}
    \end{align} where the upper bound in $(a)$ follows by a canonical decomposition of $\Escr_{\Tscr}(\thetah)$ as follows:
    \begin{align}
    \Escr_{\Tscr}(\thetah)&=    \Rscr^{\Tscr}_{\thetah,\widehat{M}^{\Tscr}_{\thetah}}-\widehat{\Rscr}^{\Tscr}_{\thetah,\widehat{M}^{\Tscr}_{\thetah}}+\underbrace{\widehat{\Rscr}^{\Tscr}_{\thetah,\widehat{M}^{\Tscr}_{\thetah}}-\widehat{\Rscr}^{\Tscr}_{\thetah,M^{\Tscr}_{\thetah}}}_{\leq 0}\non \\& \qquad +\widehat{\Rscr}^{\Tscr}_{\thetah,M^{\Tscr}_{\thetah}}-\Rscr^{\Tscr}_{\thetah,M^{\Tscr}_{\thetah}} \nonumber \\
    & \leq \Rscr^{\Tscr}_{\thetah,\widehat{M}^{\Tscr}_{\thetah}}-\widehat{\Rscr}^{\Tscr}_{\thetah,\widehat{M}^{\Tscr}_{\thetah}}+\widehat{\Rscr}^{\Tscr}_{\thetah,M^{\Tscr}_{\thetah}}-\Rscr^{\Tscr}_{\thetah,M^{\Tscr}_{\thetah}} \leq 2 \Gscr^{\Tscr}_{\thetah}(\Dscr^{\Tscr}).
    \end{align}
    Note here that if $|\Theta|=1$, then $d^{\Tscr}(\thetah,\theta^{\Tscr}_{*})=0$ in \eqref{eq:pointer1}, and we get the upper bound in \eqref{eq:bound_3}. For $|\Theta|>1$, 
   the bound in \eqref{eq:bound_1} follows from  \eqref{eq:pointer1} by using the inequality \eqref{eq:DST_def}.
    
We now upper bound the  source task-distance $d^{\Sscr}(\thetah,\theta_{*}^{\Sscr})$ in \eqref{eq:bound_1}. Towards this, we note that the following sequence of inequalities hold for any measurement $M \in \Mscr$, \begin{align}
    d^{\Sscr}(\thetah,\theta_{*}^{\Sscr})&=\Rscr^{\Sscr}_{\thetah,M^{\Sscr}_{\thetah}}-\Rscr^{\Sscr}_{\theta_{*}^{\Sscr},M^{\Sscr}_{\theta_{*}^{\Sscr}}}\\
    & \stackrel{(a)}{\leq}\Rscr^{\Sscr}_{\thetah,M}-\Rscr^{\Sscr}_{\theta_{*}^{\Sscr},M^{\Sscr}_{\theta_{*}^{\Sscr}}}\\
    &=\Rscr^{\Sscr}_{\thetah,M}-\widehat{\Rscr}^{\Sscr}_{\thetah,M}+\widehat{\Rscr}^{\Sscr}_{\thetah,M}-\widehat{\Rscr}^{\Sscr}_{\theta_{*}^{\Sscr},M^{\Sscr}_{\theta_{*}^{\Sscr}}}\non \\&+\widehat{\Rscr}^{\Sscr}_{\theta_{*}^{\Sscr},M^{\Sscr}_{\theta_{*}^{\Sscr}}}-\Rscr^{\Sscr}_{\theta_{*}^{\Sscr},M^{\Sscr}_{\theta_{*}^{\Sscr}}} \label{eq:pointer_2}
\end{align}where the inequality in $(a)$ follows since $\Rscr^{\Sscr}_{\thetah,M^{\Sscr}_{\thetah}} =\min_{M \in \Mscr} \Rscr^{\Sscr}_{\thetah,M} \leq \Rscr^{\Sscr}_{\thetah,M} $ for all $M \in \Mscr$. 
In particular, choosing $M=\widehat{M}^{\Sscr}_{\thetah}$ in \eqref{eq:pointer_2}, yields the second difference of \eqref{eq:pointer_2} bounded as
    $
        \widehat{R}^{\Sscr}_{\thetah, \widehat{M}^{\Sscr}_{\thetah}}-\widehat{R}^{\Sscr}_{\theta^{\Sscr}_{*},M^{\Sscr}_{\theta^{\Sscr}_{*}}}\leq 0,
   $
   whereby we have that 
   \begin{align}
    d^{\Sscr}(\thetah,\theta_{*}^{\Sscr})& \leq \Rscr^{\Sscr}_{\thetah,\widehat{M}^{\Sscr}_{\thetah}}-\widehat{\Rscr}^{\Sscr}_{\thetah,\widehat{M}^{\Sscr}_{\thetah}}+\widehat{\Rscr}^{\Sscr}_{\theta_{*}^{\Sscr},M^{\Sscr}_{\theta_{*}^{\Sscr}}}-\Rscr^{\Sscr}_{\theta_{*}^{\Sscr},M^{\Sscr}_{\theta_{*}^{\Sscr}}} \non \\
    & \leq \sup_{\theta \in \Theta}\Gscr^{\Sscr}_{\theta}(\Dscr^{\Sscr}) + \Gscr^{\Sscr}_{\theta^{\Sscr}_{*}}(\Dscr^{\Sscr}). \label{eq:pointer5}
   \end{align} Using \eqref{eq:pointer5} in \eqref{eq:bound_1} yields \eqref{eq:bound_2}.
\section{Proof of Theorem~3.1}\label{app:DST}

To derive the dissimilarity measure $D^{ST}$, we start by considering the difference 
\begin{align}
    &d^{\Tscr}(\theta,\theta^{\Tscr}_{*})-d^{\Sscr}(\theta,\theta^{\Sscr}_{*})\\
    &\stackrel{(a)}{=} T^{\Tscr}(\theta_{*}^{\Tscr})-T^{\Tscr}(\theta) -T^{\Sscr}(\theta_{*}^{\Sscr})+T^{\Sscr}(\theta)\\
    & \stackrel{(b)}{\leq} T^{\Tscr}(\theta_{*}^{\Tscr})-T^{\Sscr}(\theta_{*}^{\Tscr})-T^{\Tscr}(\theta) +T^{\Sscr}(\theta)\\
    & \leq 2 \sup_{\theta \in \Theta}|T^{\Tscr}(\theta)-T^{\Sscr}(\theta)|:=D^{ST}_{\mathrm{trace}} \label{eq:DST_appendix},
\end{align}where the equality in $(a)$ follows from Lemma A.1 with $T^{\mathcal{A}}(\theta)=T(p^{\Ascr}_c(0)\rho_{\theta|0}^{\Ascr},p^{\Ascr}_c(1)\rho_{\theta|1}^{\Ascr})$ denoting the inter-class trace distance; and the inequality in $(b)$ follows since $T^{\Sscr}(\theta^{\Sscr}_{*}) \geq T^{\Sscr}(\theta)$ for all $\theta \in \Theta$ (Lemma A.1) which in turn implies that $T^{\Sscr}(\theta^{\Sscr}_{*}) \geq T^{\Sscr}(\theta^{\Tscr}_{*})$.

To derive the upper bound of \eqref{eq:DST_TV} on $D^{ST}_{\mathrm{trace}}$, we start by noting that for any $\theta \in \Theta$, we have that
\begin{align}
    &T^{\Tscr}(\theta)-T^{\Sscr}(\theta)\non \\&=\frac{1}{2}\lVert p_c^{\Tscr}(0)\rho^{\Tscr}_{\theta|0}-p_c^{\Tscr}(1)\rho^{\Tscr}_{\theta|1}\rVert_1-\frac{1}{2} \lVert p_c^{\Sscr}(0)\rho^{\Sscr}_{\theta|0}-p_c^{\Sscr}(1)\rho^{\Sscr}_{\theta|1}\rVert_1  \non\\
    & \stackrel{(a)}{=}\frac{1}{2} \max_U \Bigl| \Tr\Bigl(U( p_c^{\Tscr}(0)\rho^{\Tscr}_{\theta|0}-p_c^{\Tscr}(1)\rho^{\Tscr}_{\theta|1})\Bigr)\Bigr|\non \\&-\frac{1}{2}\max_U \Bigl|\Tr\Bigl(U( p_c^{\Sscr}(0)\rho^{\Sscr}_{\theta|0}-p_c^{\Sscr}(1)\rho^{\Sscr}_{\theta|1})\Bigr)\Bigr| \\
    &=\frac{1}{2}  \Bigl| \Tr\Bigl(U^{*}( p_c^{\Tscr}(0)\rho^{\Tscr}_{\theta|0}-p_c^{\Tscr}(1)\rho^{\Tscr}_{\theta|1})\Bigr)\Bigr|\non \\&-\frac{1}{2}\max_U \Bigl|\Tr\Bigl(U( p_c^{\Sscr}(0)\rho^{\Sscr}_{\theta|0}-p_c^{\Sscr}(1)\rho^{\Sscr}_{\theta|1})\Bigr)\Bigr| \\
    &\leq \frac{1}{2}  \Bigl| \Tr\Bigl(U^{*}( p_c^{\Tscr}(0)\rho^{\Tscr}_{\theta|0}-p_c^{\Tscr}(1)\rho^{\Tscr}_{\theta|1})\Bigr)\Bigr|\non \\&-\frac{1}{2} \ \Bigl|\Tr\Bigl(U^{*}( p_c^{\Sscr}(0)\rho^{\Sscr}_{\theta|0}-p_c^{\Sscr}(1)\rho^{\Sscr}_{\theta|1})\Bigr)\Bigr| \non
    \\
    & \stackrel{(b)}{\leq} \frac{1}{2} \biggl|   \Tr\Bigl(U^{*}( p_c^{\Tscr}(0)\rho^{\Tscr}_{\theta|0}-p_c^{\Tscr}(1)\rho^{\Tscr}_{\theta|1} \non \\&- p_c^{\Sscr}(0)\rho^{\Sscr}_{\theta|0}+p_c^{\Sscr}(1)\rho^{\Sscr}_{\theta|1})\Bigr)\Biggr|\non \\
    &\leq \frac{1}{2} \Bigl|   \Tr\Bigl(U^{*}( p_c^{\Tscr}(0)\rho^{\Tscr}_{\theta|0}-p_c^{\Sscr}(0)\rho^{\Sscr}_{\theta|0})\Bigr)\Bigr| \non \\&+\frac{1}{2}\Bigl| \Tr\Bigl(U^{*}( p_c^{\Sscr}(1)\rho^{\Sscr}_{\theta|1}-p_c^{\Tscr}(1)\rho^{\Tscr}_{\theta|1})\Bigr)\Bigr| \non \\&\leq  T(p_c^{\Tscr}(0)\rho^{\Tscr}_{\theta|0},p_c^{\Sscr}(0)\rho^{\Sscr}_{\theta|0}) + T(p_c^{\Sscr}(1)\rho^{\Sscr}_{\theta|1},p_c^{\Tscr}(1)\rho^{\Tscr}_{\theta|1}) \label{eq:intermediate_1} 
\end{align} where the equality in $(a)$ follows from the variational representation of the trace norm, i.e,
$\lVert M \rVert_1 = \max_U |\Tr(MU)|$
with the maximization done over the space of all unitary operators. The inequality in $(b)$ follows since for any $x,y \in \mathbb{R}$, $|x|-|y|\leq |x-y|$. The final inequality again follows from the variational representation of trace norm  and the definition of trace distance in \eqref{eq:trdist}.

In a similar way, one can verify that the inequality 
$T^{\Sscr}(\theta)-T^{\Tscr}(\theta) \leq \sum_{c 
\in \{0,1\}} T(p_c^{\Tscr}(c)\rho^{\Tscr}_{\theta,c},p_c^{\Sscr}(c)\rho^{\Sscr}_{\theta,c})$ holds. Consequently, we  have that
\begin{align}
    |T^{\Tscr}(\theta)-T^{\Sscr}(\theta)| \leq \sum_{c 
\in \{0,1\}} T(p_c^{\Tscr}(c)\rho^{\Tscr}_{\theta,c},p_c^{\Sscr}(c)\rho^{\Sscr}_{\theta,c}).
\end{align}

Further, the trace distance $T(p_c^{\Tscr}(c)\rho^{\Tscr}_{\theta|c},p_c^{\Sscr}(c)\rho^{\Sscr}_{\theta|c})$ can be upper bounded as
\begin{align}
&2T(p_c^{\Tscr}(c)\rho^{\Tscr}_{\theta|c},p_c^{\Sscr}(c)\rho^{\Sscr}_{\theta|c})\non \\&=\max_U \Bigl|\Tr\Bigl(U(p_c^{\Tscr}(c)\rho^{\Tscr}_{\theta|c}-p_c^{\Sscr}(c)\rho^{\Sscr}_{\theta|c} )\Bigr) \Bigr|\non \\
&=\max_U \Bigl|\sum_x \Tr\Bigl(U\Bigl(p^{\Tscr}(c,x)\rho_{\theta}(x)-p^{\Sscr}(c,x)\rho_{\theta}(x) \Bigr)\Bigr) \Bigr|\non \\
&=\max_U \Bigl|\sum_x p^{\Tscr}(x|c) \Tr\Bigl(U\Bigl(p_c^{\Tscr}(c)\rho_{\theta}(x)-p_c^{\Sscr}(c)\rho_{\theta}(x)\Bigr)\Bigr) \non \\&+\sum_x (p^{\Tscr}(x|c)-p^{\Sscr}(x|c)) p^{\Sscr}_c(c)\Tr(U\rho_{\theta}(x)) \Bigr| \non \\
&\leq \max_U \Bigl|(p^{\Tscr}_c(c)-p^{\Sscr}_c(c))\Tr\Bigl(U \rho^{\Tscr}_{\theta|c}\Bigr) \Bigr| \non \\
&+\sum_x |(p^{\Tscr}(x|c)-p^{\Sscr}(x|c))|p^{\Sscr}_c(c) \max_U |\Tr(U\rho_{\theta}(x))|\non \\
&\leq |p^{\Tscr}_c(c)-p^{\Sscr}_c(c)|\lVert \rho_{\theta,c}^{\Tscr}\rVert_1\non \\&+p^{\Sscr}_c(c)\sum_x |(p^{\Tscr}(x|c)-p^{\Sscr}(x|c))| \lVert \rho_{\theta}(x)\rVert_1 \non \\
&=|p^{\Tscr}_c(c)-p^{\Sscr}_c(c)|+p^{\Sscr}_c(c)\sum_x |(p^{\Tscr}(x|c)-p^{\Sscr}(x|c))|  \non \\
&=|p^{\Tscr}_c(c)-p^{\Sscr}_c(c)|+2p^{\Sscr}_c(c)\TV(p^{\Tscr}(x|c),p^{\Sscr}(x|c))).
\end{align}
We thus have that
\begin{align}
  D^{ST}_{\mathrm{trace}}&=  2 \sup_{\theta \in \Theta}|T^{\Tscr}(\theta)-T^{\Sscr}(\theta)| \non \\
  &\leq 2\biggl(\TV(p^{\Tscr}_c,p^{\Sscr}_c)+\Ebb_{p^{\Sscr}_c}[\TV(p^{\Tscr}(x|c),p^{\Sscr}(x|c)))] \biggr)\non \\
  &=D^{ST}_{\mathrm{TV}}.
\end{align}

\section{Proof of Lemma~\ref{lem:rademacher}}\label{app:cascadeRademachercomplexity}
In this section, we derive an upper bound on the Rademacher complexity $\mathfrak{R}^{\Ascr}_{\Theta,\Mscr}$ of the joint space of parameterized PQCs and measurement operators that depends on $n$, the dimension of the Hilbert space, as well as the marginal distribution $p^{\Ascr}(x)$, for the  following class of variational quantum circuits.
\subsection{General Ansatz}
Assume that the PQC is implemented via the unitary gate,
\begin{align}
    U(\theta,x)=\prod_{l=1}^L U_l(\theta_l)S_l(x), \label{eq:VQC}
\end{align}which consists of $L$ layers of alternating parameterized unitary gates $U_l(\theta_l)$ and data embedding gates $S_l(x)$ \cite{sweke2020stochastic}. Each parameterized gate $U_l(\theta_l)$ acts on $k$ qubits. The pure state density matrix $\rho_{\theta}(x)$ is then obtained via the operation of the unitary $U(\theta,x)$  on an initial quantum state $\vert 0\rangle$ as $\rho_{\theta}(x)=U(\theta,x)\vert 0\rangle \langle 0 \vert U^{\dag}(\theta,x)$. Note that the PQC in \eqref{eq:VQC} accounts for one-time data encoding strategy (with $S_l(x)=S(x)$), as well as repeated encoding strategies, and thus it describes a large class of PQCs used in quantum machine learning \cite{schuld2021machine}.

To bound the joint Rademacher complexity, we equivalently write the loss function as $\ell_{M,\theta}(c_j,x_j)= \Tr(\delta_{c_j}(0)M_1 \rho_{\theta}(x_j)+\delta_{c_j}(1)M_0 \rho_{\theta}(x_j))$, where $\delta_c(a)$ is the indicator function which takes value $1$ when $c=a$ and is zero otherwise. Substituting $M_0=I-M_1$, we get that $\ell_{M,\theta}(c_j,x_j)= \delta_{c_j}(1)+\Tr(M_1(\delta_{c_j}(0)-\delta_{c_j}(1)) \rho_{\theta}(x_j))$. Thus,
\begin{align}
    &\mathfrak{R}^{\Ascr}_{\Theta,\Mscr}
     =\Ebb_{p^{\Ascr}(c,x)} \Ebb_{p({\sigma})}\biggl[\sup_{\theta\in \Theta, 0\leq M \leq I} \frac{1}{\sqrt{N^{\Ascr}}} \sum_{j=1}^{N^{\Ascr}} \sigma_j \Bigl(\delta_{c_j}(1) \non \\&+\Tr(M \rho_{\theta}(x_j)) \Delta_j(0,1)\Bigr) \biggr]\non \\
    &=\Ebb_{p^{\Ascr}(c,x)} \Ebb_{p({\sigma})}\biggl[\sup_{\substack{\theta\in \Theta\\ 0\leq M \leq I}}   \Tr\Bigl(M \sum_{j=1}^{{N^{\Ascr}}} \frac{ \sigma_j}{\sqrt{N^{\Ascr}}}\Delta_j(0,1)\rho_{\theta}(x_j)\Bigr) \biggr]\label{eq:ub_app_1}\\
    & \stackrel{(a)}{\leq}n\Ebb_{p^{\Ascr}(c,x)} \underbrace{\Ebb_{p({\sigma})}\biggl[\sup_{\theta\in \Theta}   \biggl \lVert \sum_{j=1}^{N^{\Ascr}} \sigma_j \frac{1}{\sqrt{N^{\Ascr}}}\Delta_j(0,1)\rho_{\theta}(x_j) \biggr \rVert \biggr]}_{\mathfrak{R}^{\Ascr}_{\Theta}} \label{eq:4}.
\end{align}The inequality in  $(a)$ follows by using the inequality $\Tr(AB)\leq \Tr(A) \lVert B \rVert$ for $A \geq 0$ with $\lVert \cdot \rVert$ denoting the operator norm (or Schatten-$\infty$ norm) \cite{watrous2018theory}, and noting that  $\Tr(M) \leq n$, where $n$ is the (finite) dimension of the Hilbert space. 

We now  obtain an upper bound on the term $\mathfrak{R}^{\Ascr}_{\Theta}$ in \eqref{eq:4}. To this end,  we proceed by obtaining an $\epsilon$-cover \cite{du2022efficient}, for $\epsilon>0$, of the set of density matrices \begin{align}
 \mathcal{F}=  \biggl \lbrace \rho_{\theta}(\cdot)&=  U(\theta,\cdot) \vert 0\rangle \langle 0 \vert U(\theta,\cdot)^{\dag} \biggl\lvert \non \\&  U(\theta,\cdot)=\Bigl(\prod_{l=1}^L U_l(\theta_l)S_l(\cdot)\Bigr),\theta \in \Theta\biggr \rbrace
\end{align} that describe the PQC in \eqref{eq:VQC}, in terms of the operator norm distance. We adopt the notation in \cite{du2022efficient} to define $\epsilon$-covers. The evaluation of $\epsilon$-cover of $\mathcal{F}$ can be done in two steps. First, we obtain an $\tilde{\epsilon}=\epsilon/L$-cover $\Cscr(U(2^k),\tilde{\epsilon},\lVert \cdot \rVert)$, in terms of the operator norm distance  $\lVert \cdot \rVert$, of the space $U(2^k)$ of all $k$-qubit unitary operators. Lemma 1 of \cite{barthel2018fundamental} ensures existence of such a cover with cardinality $|\Cscr(U(2^k),\tilde{\epsilon},\lVert \cdot \rVert )| \leq (7/\tilde{\epsilon})^{2^{2k}}$. Second, we consider the set,
\begin{align*}
    \tilde{\Fscr}=\biggl \lbrace \tilde{\rho}(\cdot)&=\tilde{U}(\cdot) \vert 0\rangle \langle 0 \vert \tilde{U}(\cdot)^{\dag} \biggl| \non \\&\tilde{U}(\cdot)=\Bigl(\prod_{l=1}^L \tilde{U_l}S_l(\cdot)\Bigr), \tilde{U}_l \in \Cscr(U(2^k), \tilde{\epsilon},\lVert \cdot \rVert)\biggr \rbrace.
\end{align*} Then, for every density-valued function $\rho_{\theta}(\cdot) \in \mathcal{F}$, there exists a counterpart $\tilde{\rho}_{\theta}(\cdot) \in \tilde{\Fscr}$ such that \cite{du2022efficient} \begin{align}
    \lVert \rho_{\theta}(\cdot)-\tilde{\rho}_{\theta}(\cdot)\rVert &= \lVert U(\theta,\cdot) \vert 0\rangle \langle 0 \vert U(\theta,\cdot)^{\dag}- \tilde{U}(\cdot) \vert 0\rangle \langle 0 \vert \tilde{U}(\cdot)^{\dag}\rVert \non \\
    &\leq \sum_{l=1}^L \lVert U_l(\theta_l,\cdot) - \tilde{U}_l(\cdot)\rVert  \\
    &\leq \tilde{\epsilon}L=\epsilon,
\end{align} where the first inequality follows by using triangle inequality and by the unitarily invariance of Schatten $p$-norms \cite{watrous2018theory}. Consequently, we denote  $\mathcal{C}(\mathcal{F},\epsilon,\lVert \cdot \rVert))=\tilde{\Fscr}$ as an $\epsilon$-cover, for $\epsilon>0$,  of the space $\mathcal{F}$ with respect to operator norm distance. 

We thus have the inequalities
\begin{align}
   \mathfrak{R}^{\Ascr}_{\Theta}&=\Ebb_{p({\sigma})}\biggl[\sup_{\theta\in \Theta}   \biggl \lVert \sum_{j=1}^{N^{\Ascr}}  \frac{\sigma_j \Delta_j(0,1)}{\sqrt{N^{\Ascr}}}(\rho_{\theta}(x_j)-\tilde{\rho}_{\theta}(x_j)+\tilde{\rho}_{\theta}(x_j)) \biggr \rVert \biggr]\non \\
    & \stackrel{(a)}{\leq} \Ebb_{p({\sigma})}\biggl[\sum_{j=1}^{N^{\Ascr}}  \frac{1}{\sqrt{N^{\Ascr}}} \sup_{\theta\in \Theta}   \biggl \lVert \rho_{\theta}(x_j)-\tilde{\rho}_{\theta}(x_j)\biggr \rVert \biggr] \non \\& + \Ebb_{p({\sigma})}\biggl[\sup_{\theta\in \Theta}   \biggl \lVert \sum_{j=1}^{N^{\Ascr}} \sigma_j \frac{1}{\sqrt{N^{\Ascr}}}\Delta_j(0,1)\tilde{\rho}_{\theta}(x_j) \biggr \rVert \biggr]\non \\
    & \stackrel{(b)}{\leq} \sqrt{N^{\Ascr}}\epsilon + \Ebb_{p({\sigma})}\biggl[\sup_{\theta\in \Theta}   \biggl \lVert \sum_{j=1}^{N^{\Ascr}} \sigma_j \frac{1}{\sqrt{N^{\Ascr}}}\Delta_j(0,1)\tilde{\rho}_{\theta}(x_j) \biggr \rVert_2 \biggr] \label{eq:111},
\end{align} where the inequality in $(a)$ follows from the triangle inequality of operator norm, and the inequality in $(b)$ follows from the use of $\epsilon$-cover of $\mathcal{F}$, and the monotonicity of Schatten $p$-norms, where by $\lVert A \rVert_p \leq \lVert A \rVert_q$ for $1 \leq q \leq p \leq \infty$ \cite{watrous2018theory}. Specifically, for $p=2$, we have  \ie $\lVert A \rVert_2=\sqrt{\Tr(AA^{\dag})}$. Taking $A=\sum_{j=1}^{N^{\Ascr}} \sigma_j \Delta_j(0,1)\tilde{\rho}_{\theta}(x_j)$, we have 
\begin{align}\Tr(AA^{\dag})&=\Tr(A^2)=\sum_j\Tr(\tilde{\rho}_{\theta}(x_j)^2 ) \non \\&+\sum_j \sum_{j'\neq j}\sigma_j \sigma_{j'}\Delta_{j}(0,1)\Delta_{j'}(0,1)\Tr(\tilde{\rho}_{\theta}(x_j)\tilde{\rho}_{\theta}(x_{j'}))\non \\&=D(\tilde{\rho}_{\theta})+C(\tilde{\rho}_{\theta}). \label{eq:tracesquare}\end{align} Subsequently, we get that
\begin{align}
   &  \sup_{\theta \in \Theta} \biggl \lVert \sum_{j=1}^{N^{\Ascr}} \sigma_j \frac{1}{\sqrt{N^{\Ascr}}}\Delta_j(0,1)\tilde{\rho}_{\theta}(x_j) \biggr \rVert_2 \non \\
   &= \frac{1}{\sqrt{N^{\Ascr}}} \sup_{\hat{\rho} \in \Cscr(\Fscr,\epsilon,\lVert \cdot \rVert)} \sqrt{C(\hat{\rho})+D(\hat{\rho}) } \non \\
   & \leq \frac{1}{\sqrt{N^{\Ascr}}} \sqrt{\sup_{\hat{\rho}\in \in \Cscr(\Fscr,\epsilon,\lVert \cdot \rVert) }C(\hat{\rho})+\sup_{\hat{\rho}\in \in \Cscr(\Fscr,\epsilon,\lVert \cdot \rVert) } \sum_j\Tr(\hat{\rho}(x_j)^2 )}\non\\
   & \leq \frac{1}{\sqrt{N^{\Ascr}}}  \sqrt{\sum_{\hat{\rho}\in \in \Cscr(\Fscr,\epsilon,\lVert \cdot \rVert) }C(\hat{\rho})+\sup_{\hat{\rho} \in \in \Cscr(\Fscr,\epsilon,\lVert \cdot \rVert) } \sum_j\Tr(\hat{\rho}(x_j)^2 )}.\non
\end{align} Taking expectation with respect to $p({\sigma})$, and using Jensen's inequality then gives the inequality
\begin{align}
    &  \Ebb_{p({\sigma})}\biggl[\sup_{\theta \in \Theta} \biggl \lVert \sum_{j=1}^{N^{\Ascr}} \sigma_j \frac{1}{\sqrt{N^{\Ascr}}}\Delta_j(0,1)\tilde{\rho}_{\theta}(x_j) \biggr \rVert_2 \biggr] \non \\
    & \leq \frac{1}{\sqrt{N^{\Ascr}}}  \sqrt{\sum_{\hat{\rho} \in \in \Cscr(\Fscr,\epsilon,\lVert \cdot \rVert)}\Ebb_{p({\sigma})}[C(\hat{\rho})]+\sup_{\hat{\rho} \in \in \Cscr(\Fscr,\epsilon,\lVert \cdot \rVert)} \sum_j\Tr(\hat{\rho}(x_j)^2 )}\non \\
    &= \frac{1}{\sqrt{N^{\Ascr}}} \sqrt{\sup_{\hat{\rho} \in \in \Cscr(\Fscr,\epsilon,\lVert \cdot \rVert)}\sum_j\Tr(\hat{\rho}(x_j)^2 )},\non
\end{align} where the last equality follows since Rademacher variables are i.i.d and mean zero, whereby $\Ebb_{p({\sigma})}[C(\hat{\rho})]=0$.
Finally, we then have that for any $\epsilon>0$,
\begin{align}
    \mathfrak{R}_{\Theta,\Mscr}^{\Ascr} &\leq n\sqrt{N^{\Ascr}}\epsilon + \frac{n}{\sqrt{N^{\Ascr}}}\Ebb_{p^{\Ascr}(x)}\biggl[\sqrt{\sum_j \sup_{\hat{\rho} \in \in \Cscr(\Fscr,\epsilon,\lVert \cdot \rVert)} \Tr(\hat{\rho}(x_j)^2 )} \biggr] \non \\
& \leq n\sqrt{N^{\Ascr}}\epsilon+n\sqrt{\Ebb_{p^{\Ascr}(x)}[\sup_{\hat{\rho} \in \in \Cscr(\Fscr,\epsilon,\lVert \cdot \rVert)} \Tr(\hat{\rho}(x)^2 )]}.
\end{align} Taking $\epsilon$ arbitrarily small, we get that
\begin{align}
    \mathfrak{R}_{\Theta,\Mscr}^{\Ascr} &\leq n\sqrt{\Ebb_{p^{\Ascr}(x)}[\sup_{\hat{\rho} \in \in \Cscr(\Fscr,\epsilon,\lVert \cdot \rVert)} \Tr(\hat{\rho}(x)^2 )]}.
\end{align}

\subsection{ One-Time Data Encoding Ansatz}
In this section, we consider PQC's of the form
\begin{align}
    U(\theta,x)= \underbrace{\prod_{l=1}^L U_l(\theta_l)}_{:=U(\theta)}S(x),
\end{align} whereby quantum states are of the form
\begin{align}
    \rho_{\theta}(x)= U(\theta)S(x)\vert 0 \rangle \langle 0 \vert S(x)^{\dag} U(\theta)^{\dag}.
\end{align} For such quantum states, the Rademacher complexity in \eqref{eq:ub_app_1} can be tightened as 
\begin{align}
    &\mathfrak{R}^{\Ascr}_{\Theta,\Mscr} \leq \Ebb_{p^{\Ascr}(c,x)} \Ebb_{p({\sigma})}\biggl[\sup_{\theta\in \Theta}   \biggl \lVert \sum_{j=1}^{N^{\Ascr}} \sigma_j \frac{1}{\sqrt{N^{\Ascr}}}\Delta_j(0,1)\rho_{\theta}(x_j) \biggr \rVert_1 \biggr]\non \\
    &=\Ebb_{p^{\Ascr}(c,x)} \Ebb_{p({\sigma})}\biggl[\sup_{\theta\in \Theta}   \biggl \lVert U(\theta)\biggl(\sum_{j=1}^{N^{\Ascr}} \sigma_j \frac{1}{\sqrt{N^{\Ascr}}}\Delta_j(0,1)\kappa(x_j)\biggr)U(\theta)^{\dag} \biggr \rVert_1 \biggr] \non \\
    &\stackrel{(a)}{=}\Ebb_{p^{\Ascr}(c,x)} \Ebb_{p({\sigma})}\biggl[\sup_{\theta\in \Theta}   \biggl \lVert \sum_{j=1}^{N^{\Ascr}} \sigma_j \frac{1}{\sqrt{N^{\Ascr}}}\Delta_j(0,1)\kappa(x_j)\biggr \rVert_1 \biggr] \non \\
    &=\Ebb_{p^{\Ascr}(c,x)} \Ebb_{p({\sigma})}\biggl[  \biggl \lVert \underbrace{\sum_{j=1}^{N^{\Ascr}} \sigma_j \frac{1}{\sqrt{N^{\Ascr}}}\Delta_j(0,1)\kappa(x_j)}_{:=B}\biggr \rVert_1 \biggr] \non \\
    & \stackrel{(b)}{\leq} \Tr \biggl(\sqrt{\Ebb_{p^{\Ascr}(c,x)} \Ebb_{p({\sigma})}[BB^{\dag}]} \biggr)\non \\
    &=\Tr \biggl(\sqrt{\Ebb_{p^{\Ascr}(x)} [\kappa(x)^2]} \biggr),
\end{align} where we have used $\kappa(x_j)=S(x_j)\vert 0 \rangle \langle 0 \vert S(x_j)^{\dag}$. The first inequality follows from tracial Matrix H\'olders inequality that $|\Tr(A^{\dag}B)|\leq \lVert A^{\dag} \rVert_{\infty} \lVert B \rVert_1$  and noting that $\lVert M \rVert_{\infty}\leq 1$. The equality in $(a)$ follows since trace norm is unitarily invariant and inequality in $(b)$ is due to \cite[Lemma 1]{banchi2021generalization}. The last equality can be seen by evaluating $BB^{\dag}$ as in \eqref{eq:tracesquare}.

\section{Details of Example}
For the example considered in Section~\ref{sec:example}, the Pauli-X rotation is defined as
\begin{align}
R_X(x)=
    \begin{bmatrix}
    \cos(x/2)& -i \sin(x/2)\\
    i \sin(x/2) & \cos(x/2)
    \end{bmatrix},
\end{align} 
and the general rotation $\Rot_{\theta}$ is defined as
\begin{align*}
   \Rot_{\theta}=\begin{bmatrix} e^{i(-\frac{\theta_1}{2}-\frac{\theta_3}{2})}\cos(\theta_2/2) & -e^{i(-\frac{\theta_1}{2}+\frac{\theta_3}{2})}\sin(\theta_2/2)\\
    e^{i(\frac{\theta_1}{2}-\frac{\theta_3}{2})}\sin(\theta_2/2)& e^{i(\frac{\theta_1}{2}+\frac{\theta_3}{2})}\cos(\theta_2/2)
    \end{bmatrix}.
    \end{align*}

\end{document}